%% file: BR_ArXiv.tex
\documentclass[aps,prd,twocolumn,showpacs,amsmath,amssymb,amsthm,nofootinbib, preprintnumbers]{revtex4-2}
\usepackage{amsmath,amsthm}
\usepackage{graphicx}
\usepackage{epstopdf}
\usepackage{float}
\usepackage{hyperref}
\usepackage{color}
\usepackage[T1]{fontenc}
\usepackage[utf8]{inputenc}
\usepackage[toc,page]{appendix}
\usepackage[usenames,dvipsnames]{xcolor}
\usepackage[normalem]{ulem}
\usepackage{lipsum, babel}
\usepackage{soul}

\DeclareMathOperator{\arcCoth}{arcCoth} 
\DeclareMathOperator{\arcCot}{arcCot}

\newcommand{\tcr}{\textcolor{red}}
\newcommand{\tcb}{\textcolor{blue}}

\newcommand{\be}{\begin{equation}}
	\newcommand{\ee}{\end{equation}}
\newcommand{\ba}{\begin{eqnarray}}
	\newcommand{\ea}{\end{eqnarray}}

\newcommand{\beq}{\begin{equation}}
	\newcommand{\eeq}{\end{equation}}
\newcommand{\beqa}{\begin{eqnarray}}
	\newcommand{\eeqa}{\end{eqnarray}}

\newtheorem{theorem}{Theorem}

\begin{document}
	
	\title{{Bertotti--Robinson {and Bonnor--Melvin universes} 
    in nonlinear electrodynamics}}

		\author{David Kubiz\v n\'ak}
		\email{david.kubiznak@matfyz.cuni.cz}

        \author{Otakar Sv{\'i}tek}
		\email{otakar.svitek@matfyz.cuni.cz}

        \author{Tayebeh Tahamtan}
		\email{tahamtan@utf.mff.cuni.cz}
        
        \affiliation{Institute of Theoretical Physics, Faculty of Mathematics and Physics,
			Charles University, Prague, V Hole{\v s}ovi{\v c}k{\' a}ch 2, 180 00 Prague 8, Czech Republic}

		\date{April 30, 2026}

		\begin{abstract}
			We review the status of Birkhoff's theorem in the presence of 
            nonlinear electrodynamics (NLE) --  extending the analysis to the case without asymptotic flatness. This leads to the Bertotti--Robinson-type (direct product) geometry with generally unequal radii for its $AdS_{2}$ and $S_{2}$ factors, determined by a given NLE model. As can be expected, such a geometry can also be recovered from a near-horizon limit
            of the corresponding extremal  
            NLE charged black hole (if it exists). These extremal black holes are shown to be linearly stable for specific NLE models, unlike in the Maxwell--$\Lambda$ case where unequal radii also arise in near-horizon geometry. Regular particle-like models are constructed by replacing the interior of these black holes with corresponding Bertotti--Robinson-type geometry. We also revisit the NLE generalization of the Bonnor--Melvin universe, describing a regular axisymmetric configuration of magnetic field lines in gravito-magnetic equilibrium. Explicit examples are derived for the Maxwell, Born--Infeld,  RegMax, and Frolov--Hayward theories of electrodynamics. 
		
		\end{abstract}

		\maketitle


\section{Introduction}

Original Birkhoff's theorem \cite{Birkhoff:1923relativity} (derived earlier by Jebsen \cite{Jebsen:1921,Jebsen:2005hkb}) is formulated for solutions of Einstein equations in the absence of matter and states that any vacuum spherically symmetric solution must be static and asymptotically flat. This naturally leads to the Schwarzschild solution.

A generalization of this theorem to the electro-vacuum case of the Einstein--Maxwell theory
gives Reissner--Nordstr\"{o}m metric as a unique asymptotically flat solution. However, in such a case, there also exists a spherically symmetric but not asymptotically flat solution -- the so called {\em Bertotti--Robinson} universe \cite{Bertotti:1959, Robinson:1959, Dolan:1968, Tariq1974} -- leading to a nontrivial generalization of Birkhoff's theorem. More extensions of this theorem including the cosmological constant and other types of sources are gathered in \cite{Book-Frolov2011}.  More recently, an extension of Birkhoff's theorem for the theory of {\em nonlinear electrodynamics (NLE)} was  discussed in \cite{Diaz:2019-Birkhoff}, concentrating on the asymptotically flat case. The question we want to address here is: what type of geometry one obtains when the condition of asymptotic flatness is relaxed for the  spherically symmetric Einstein-NLE system. Or, in other words, what is the NLE counterpart to the Bertotti--Robinson universe?

The Bertotti--Robinson electromagnetic universe \cite{Bertotti:1959, Robinson:1959, Dolan:1968, Tariq1974} is a conformally flat source-free solution of the Einstein--Maxwell equations with a non-null electromagnetic field.
This spacetime is formed by a direct product of a two-dimensional AdS spacetime ($AdS_{2}$) and a sphere ($S_{2}$). Hairy extensions of the Bertotti--Robinson solution were  presented in  \cite{Dereli_2021, Mazharimousavi_2022}. 

Another type of Einstein--Maxwell solution filled with a non-null magnetic (or electric) field with cylindrical instead of spherical symmetry is the {\em Bonnor--Melvin universe}, representing a bundle of magnetic (electric) field lines in {\em gravito-electromagnetic equilibrium}. This solution was originally discovered by Bonnor \cite{Bonnor:1954tis} and later by Melvin \cite{Melvin:1963qx}. Similarities between the Bonnor--Melvin and Bertotti--Robinson spacetimes are discussed in \cite{Garfinkle_2011}. The NLE generalization of the Bonnor--Melvin solution was studied in \cite{Gibbons_2001} using the Legendre dual description for a nonlinear electromagnetic field. In that paper, the authors have shown how to obtain the solution as a series expansion, or up to an integration, the primary example being the Born--Infeld theory \cite{Born:1934gh}. (See also \cite{Barrientos:2024umq} for a recent discussion of Bonnor--Melvin-type solutions in ModMax theory and \cite{Bandos:2020jsw} for a Bonnor--Melvin-like solution with baryonic charge.)  Here we will reformulate the derivation using the standard description of NLE and provide a couple of examples of Bonnor--Melvin-type spacetimes in various theories of NLE, focusing in particular on the recently studied RegMax electrodynamics \cite{Tahamtan:2020lvq, Kubiznak:2022vft}, which, as we will see and similar to the Maxwell case, yields a regular Bonnor--Melvin spacetime in the full admissible parameter range.

Our paper is organized as follows. 
To set up a stage, 
we gather equations defining theories of NLEs in Sec.~\ref{sec2} and review  the asymptotically flat case and the corresponding NLE version of Birkhoff's theorem in Sec.~\ref{Sec:3}. The generalized Bertotti--Robinson spacetimes are studied in Sec.~\ref{Sec:4}. Their alternative derivation via the near horizon limit of extremal NLE charged black holes is presented in Sec.~\ref{Sec:4.5}, together with comments on linear stability of the extremal black holes and the construction of the corresponding regular particle models. The  generalization of the Bonnor--Melvin universe is studied in Sec.~\ref{Sec:5} using i) the direct integration of the corresponding equations of motion and ii) the double Wick rotation of the corresponding planar
electrically charged solutions. We conclude in Sec.~\ref{Sec:6}. In 
Appendix~\ref{App} we derive spherical black hole solutions in the recently formulated Frolov--Hayward electrodynamics \cite{Frolov:2025ddw}, used in the main text.

\section{Theories of nonlinear electrodynamics}
\label{sec2}
Theories of NLE that are minimally coupled to Einstein's gravity are derived from the following action:
\begin{equation}\label{bulkAct}
	I= \frac{1}{16\pi} \int_{M} d^4 x\,\sqrt{-g}\Bigl({\cal R}+4\,{\cal L({\cal S}, {\cal P})}\Bigr)\,.
\end{equation}
Here, ${\cal L({\cal S}, {\cal P})}$ is the electromagnetic Lagrangian, a function of the two  electromagnetic invariants 
\be
{\cal S}=\frac{1}{2}F_{\mu\nu}F^{\mu\nu}\,,\quad 
{\cal P}=\frac{1}{2}F_{\mu\nu}(*F)^{\mu\nu}\,,
\ee
where, as usual, we have $F_{\mu\nu}=\partial_\mu A_\nu-\partial_\nu A_\mu$, in terms of the vector potential one-form $A_\mu$.  For NLE theory in vacuum,  the \emph{generalized Maxwell equations} read
\be\label{FE}
d*D=0\,,\quad
dF=0\,, 
\ee
where 
\be\label{Edef}
D_{\mu\nu} ={-2 \frac{\partial \mathcal{L}}{\partial F^{\mu\nu}}
=-2\Bigl({\cal L_S}F_{\mu\nu}+{\cal L_P}*\!F_{\mu\nu}\Bigr)\,, }
\ee
with the subscripts on ${\cal L}$ denoting differentiation.
The corresponding electric and magnetic charges can then be calculated from the standard Gaussian type integrals:
\be\label{QeQm} 
Q_e=\frac{1}{4\pi}\int_{S^2} *D\,,
\quad 
Q_m=\frac{1}{4\pi}\int_{S^2} F\,.
\ee 
Additionally, the equations of motion for the coupled system \eqref{bulkAct} are
\be \label{Hmunu}
G_{\mu\nu}=8\pi T_{\mu\nu}\,,
\ee
where the generalized EM energy-momentum tensor is given by
\be\label{Tmunu}
T^{\mu\nu}=-\frac{1}{4\pi}\Bigl(2F^{\mu\sigma}F^{\nu}{}_\sigma {\cal L_S}+{\cal P}{\cal L_P} g^{\mu\nu}-{\cal L}g^{\mu\nu}\Bigr)\,.
\ee

In this work, we consider generic NLE model of ${\cal L(S,P)}$ or ${\cal L(S)}$ type, but the results are illustrated on specific examples. Apart from linear Maxwell theory, we {will consider its maximally symmetric generalization known as the ModMax theory \cite{Bandos:2020jsw, Kosyakov:2020wxv}, as well as the famous Born--Infeld electrodynamics \cite{Born:1934gh} and the recently derived  RegMax  \cite{Tahamtan:2020lvq, Kubiznak:2022vft} {and Frolov--Hayward \cite{Frolov:2025ddw}   models. The latter three theories provide examples of NLEs with finite self-energy of point charges.}

{
The ModMax theory \cite{Bandos:2020jsw, Kosyakov:2020wxv} is characterized by a dimensionless parameter $\gamma\geq0$ and its Lagrangian reads:  
\be \label{NLE:MM}
{\cal L}_{\mbox{\tiny ModMax}}=-\frac{1}{2}\Bigl({\cal S}\cosh \gamma +\sinh\gamma\sqrt{{\cal S}^2+{\cal P}^2}\Bigr)\,.
\ee 
It is the most general theory that shares all the symmetries with the Maxwell electrodynamics, namely the conformal and electromagnetic duality invariance. 
The Maxwell Lagrangian 
\be 
{\cal L}_{\mbox{\tiny M}}=-\frac{1}{2}{\cal S}\,,
\ee 
is recovered upon setting $\gamma=0$.

The Lagrangian of Born--Infeld electrodynamics \cite{Born:1934gh} can be cast in the following form:
\be\label{BI}
{\cal L}_{\mbox{\tiny BI}}=b^2\Bigl(1-\sqrt{1+\frac{\cal S}{b^2}-\frac{{\cal P}^2}{4b^4}}\Bigr)\,, 
\ee
where the parameter $b$ gives the maximum allowed field strength. In geometric units, $[b] =(\mbox{length})^{-1}$. The Maxwell theory is recovered in the limit $b \rightarrow \infty$.

{The RegMax (Regularized Maxwell) model of NLE was derived in} \cite{Tahamtan:2020lvq, Kubiznak:2022vft}  and its properties were discussed in \cite{Hale:2023dpf, Tahamtan:2023tci, Hale:2024lzh}. The RegMax Lagrangian has the following form:
\begin{equation}\label{Tay}
{\cal L}_{\mbox{\tiny RegMax}}=-2\alpha^4\,\Bigl(1-3\ln(1-s)+\frac{s^3+3s^2-4s-2}{2(1-s)}\Bigr)\,,
\end{equation}
where 
\be \label{s}
s=\Bigl(-\frac{\mathcal{S}}{\alpha^4}\Bigr)^\frac{1}{4}\in (0,1)\,.
\ee
The theory is characterized by a dimensionfull parameter $\alpha>0$, $[\alpha^2]=(\mbox{length})^{-1}$, and reduces to the Maxwell theory when $\alpha \rightarrow \infty$. Although the Lagrangian \eqref{Tay} is rather complicated, the point charge field corresponds to the most straightforward regularization of Maxwell theory \cite{Hale:2023dpf} and it provides many standard {\em self-gravitating} 
solutions beyond spherical symmetry \cite{Hale:2023dpf, Tahamtan:2023tci}.

Most recently, a new,  similar to RegMax model of NLE, was proposed by  Frolov, to model Hayward-type regular black holes via the generalized double copy formalism \cite{Frolov:2025ddw}. The corresponding {{\em Frolov--Hayward (FH)} Lagrangian reads 
\be\label{Frolov-HaywardNLE} 
{\cal L}_{\mbox{\tiny FH}}=-{\beta^4}\bigl(\hat s+\ln(1-\hat s)\bigr)\,,\quad \hat s=\sqrt{-{\cal S}/\beta^4}\,,
\ee 
}
where $\beta$ is dimensionfull, with dimensions $[\beta]=(\mbox{length})^{-1}$, and the Maxwell limit recovered upon setting $\beta\to \infty$,  see Appendix~\ref{App} for more details and for a discussion of spherical black holes in this theory.


\section{Asymptotically flat case: generalized Birkhoff's theorem}\label{Sec:3}

In order to analyze the uniqueness of spherically symmetric NLE solutions we will split our investigation into two parts. In this section, we will address the asymptotically flat case using 
{the established results in \cite{Book-Frolov2011} and essentially recovering the generalized Birkoff's theorem in \cite{Diaz:2019-Birkhoff}.} In the next section, we will analyze a generalization of Bertotti--Robinson solution arising for NLE models.




To derive uniqueness for spherically symmetric solutions in the asymptotically flat case, we can use the following theorem derived in \cite{Book-Frolov2011} (page 166):
\begin{theorem}[{\bf Generalized Birkhoff's theorem}]\label{theorem1}{If the energy-momentum tensor generating a spherically symmetric gravitational field of the form
\be\label{ss-metric}
ds^2= \gamma_{AB}\, dx^{A}dx^{B}+R(x)^2\,d\Omega^2\,,
\ee
{where $d\Omega^2=d\theta^2+\sin^2\!\theta d\varphi^2$ is the standard element on a 2-sphere and $R(x)$ is a non-constant function,}
obeys the following condition:
\be 
T_{AB}=\frac{1}{2}\,T^C{}_C \gamma_{AB}\,,
\ee
then the corresponding solution of the Einstein equations possesses an additional Killing vector field.}
\end{theorem}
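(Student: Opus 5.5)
The plan is to carry out the standard 2+2 reduction of Einstein's equations for the warped product metric \eqref{ss-metric}. Then show that the vector field $\xi^A=\epsilon^{AB}\nabla_B R$ on the two-dimensional orbit space is a Killing vector of $\gamma_{AB}$. Here $\epsilon^{AB}$ is the volume form of $\gamma_{AB}$ and $\nabla$ its Levi-Civita connection. Since $\xi$ has no angular components and $R$ will be shown constant along it, $\xi$ also preserves the sphere factor $R^2 d\Omega^2$. It is therefore a Killing field of the full four-metric, and it is independent of the three rotational Killing vectors. It is nonzero because $R$ is assumed non-constant.

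\textbf{Step 1 (reduction).} Compute the Einstein tensor of \eqref{ss-metric}. Its mixed $AB$ components are standard:
\be
G_{AB}=-\frac{2}{R}\nabla_A\nabla_B R+\gamma_{AB}\Bigl(\frac{2}{R}\Box R+\frac{(\nabla R)^2-1}{R^2}\Bigr)\,,
\ee
with $\Box=\gamma^{AB}\nabla_A\nabla_B$. The ${\cal R}_{AB}$ of the 2-metric is proportional to $\gamma_{AB}$ and so drops out of $G_{AB}$. Only the trace-free part matters:
\be
G_{AB}-\tfrac12\gamma_{AB}G^C{}_C=-\frac{2}{R}\Bigl(\nabla_A\nabla_B R-\tfrac12\gamma_{AB}\Box R\Bigr)\,.
\ee
The hypothesis $T_{AB}=\frac12 T^C{}_C\gamma_{AB}$ says precisely that the trace-free part of $T_{AB}$ on the orbit space vanishes. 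By \eqref{Hmunu} we then obtain
\be
\nabla_A\nabla_B R=\tfrac12\gamma_{AB}\,\Box R\,.
\ee

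\textbf{Step 2 (Killing property).} Set $\xi_A=\epsilon_A{}^{B}\nabla_B R$. Because $\nabla\epsilon=0$, we have $\nabla_C\xi_A=\epsilon_A{}^B\nabla_C\nabla_B R=\frac12\Box R\,\epsilon_{AC}$. This is antisymmetric in $(C,A)$, so $\nabla_{(C}\xi_{A)}=0$ and $\xi$ is a Killing vector of $\gamma$. Moreover $\xi^A\partial_A R=\epsilon^{AB}\nabla_A R\nabla_B R=0$, so the warp factor is invariant. Viewed in four dimensions, ${\cal L}_\xi g=\,{\cal L}_\xi\gamma+2R\,(\xi\cdot\nabla R)\,d\Omega^2=0$.

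\textbf{Step 3 (main obstacle).} The delicate point is the bookkeeping in Step 1. I would verify the formula for $G_{AB}$ directly, either with Cartan structure equations for the warped product or by quoting the standard $O(3)$-reduction. The other subtlety is the claim that $\xi\neq0$ on an open dense set. Non-constancy of $R$ guarantees $\nabla R\neq0$ there. If $\nabla R$ becomes null somewhere (horizons), $\xi$ is still a nontrivial Killing field, merely null there. By analytic continuation of Killing fields, which are determined by their 1-jet at a point, it extends to the whole spacetime. The constant-$R$ case is excluded by hypothesis; it is exactly the Bertotti--Robinson-type branch treated later.
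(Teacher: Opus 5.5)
Your proof is correct and follows essentially the same route as the paper, which gives no proof of its own but defers to \cite{Book-Frolov2011}. There the extra Killing field is exactly your $\xi^A=\epsilon^{AB}\nabla_B R$, extracted from the trace-free part of the $AB$ Einstein equations, with $R$ constant along its flow; note that non-constancy of $R$ gives only $\xi\not\equiv 0$, not $\nabla R\neq 0$ on an open dense set, but that is all the theorem needs (the open-dense property then follows from the $1$-jet rigidity you mention).
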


 In the above equation,  $\gamma_{AB}=\gamma_{AB}(x)$ is a two-dimensional metric with {$x^A=(t,r)$}, 
 and $T_{AB}$ are the corresponding components of the energy momentum tensor. {The assumption that the radial function $R(t,r)$ is non-constant is crucial; the extra Killing vector is } constructed using derivatives of $R(t,r)$ and shown to be timelike.
 Furthermore, 
$R(t,r)$ is constant along this Killing field flow. If we use coordinates adapted to this new symmetry ($t$ now being the coordinate along the stationary Killing vector) this leads to an important conclusion that $R_{,t}=0$ {and $\gamma_{AB,t}=0$. At the same time, it is possible to diagonalize $\gamma_{AB}$, while preserving the above conditions.}


{Let us now consider the following ansatz for the NLE vector {potential:
 \be \label{general-A}
A=\phi(t,r) dt+p \cos\theta d\varphi\,,
 \ee
where $\phi$ is an electric potential, and the constant $p$ represents the magnetic charge.} It follows that the corresponding NLE energy-momentum tensor \eqref{Tmunu} for arbitrary model of ${\cal L}({\cal S}, {\cal P})$ 
obeys the condition of the above theorem, and we immediately have $R_{,t}=0$ for general NLE theory. 
}
Subsequently, one can introduce $R(r)$ as a new radial coordinate (provided $R_{,r}\neq 0$), which can be accomplished simply by using $R(r)=r$ from now on. 
 

Now, having established that we can assume $R=r$ and {that $\gamma_{AB}$  is a time-independent diagonal metric,}  
we can use equation ${G^{t}{}_{t}=G^{r}{}_{r}}$ (since $T^{t}{}_{t}=T^{r}{}_{r}$ is satisfied automatically) to obtain $-\gamma_{tt}\gamma_{rr}=const.$, where we can choose the constant to be one without loss of generality. Similar reasoning was considered for general sources in \cite{Jacobson:2007tj} and for NLE in our previous paper \cite{Kubiznak:2022vft}. 

Therefore, the final metric anzats is of the standard static and spherically symmetric form
\ba \label{metric-SSS}
ds^2=-f\,dt^2+\frac{dr^2}{f}+r^2d\Omega^2\,.
\ea
The Einstein equations then provide the following unique solution for arbitrary NLE source: \be\label{f_sss}
f=1-\frac{2M}{r}+\frac{\ell_{0}\,\phi}{r} +\frac{1}{2r} \int{\left(r^2{\cal L}+2p\,{\cal L}_{\cal P}\,\phi_{,r}\right)dr}\,,
\ee
{where we used the following expression for ${\cal L}_{\cal S}$ coming from the generalized Maxwell equations \eqref{FE}:
\be\label{LS-spherical}
{\cal L}_{\cal S}=\frac{1}{\phi_{,r}\,r^2}\bigl(\ell_0-p\,{\cal L}_{\cal P}\bigr)\,.
\ee
Here, {$\ell_{0}$ is an integration constant {and $\phi$ is no longer time dependent}. Using \eqref{QeQm} to compute the electric charge and applying \eqref{LS-spherical} we obtain $Q_{e}=-2\, \ell_{0}$.

The above 
solution is asymptotically flat provided that the NLE model satisfies the weak-field Maxwell correspondence.

\section{Generalized Bertotti--Robinson universe}\label{Sec:4}

In the previous section, we have shown  that if the metric function $R(r)$ in \eqref{ss-metric} is not constant, we obtain the generalization of Birkhoff's theorem for all NLE theories. {In this section we focus on the case when 
\be 
R=R_0=\mbox{const.}
\ee 
As we shall see, this will lead to a generalization of Bertotti--Robinson spacetimes.
}

\subsection{Bertotti--Robinson Universe in Maxwell theory}

{Let us first review the well-known Bertotti--Robinson solution in 
Maxwell (linear) electrodynamics. It is given by the following metric and vector potential:
\ba \label{Bertotti-Robinson}
ds^2&=&\frac{R_0^2}{y^2}(-dt^2+{dy^2})+R_0^2\,d\Omega^2\,,\nonumber\\
A&=&-\frac{e}{y} dt +p\cos\theta d\varphi\,,
\ea
where $R_0=\sqrt{e^2+p^2}$ {and we have denoted the `radial coordinate' by $y$}.
The metric is} conformally flat and represents a direct product geometry $AdS_{2} \times S_{2}$ (with both radius of the sphere and anti-de Sitter curvature radius equal to $R_{0}$). {As such, it admits a 6-dimensional group of $SL(2,{\mathbf R})\times SU(2)$ isometries.} 
Furthermore, both electromagnetic invariants are constants in this case and read:
\be 
{\cal S}=\frac{p^2-e^2}{R_0^4}\,,\quad {\cal P}=\frac{2ep}{R_0^4}\,,
\ee 
indicating that the electromagnetic field is not null. It is easy to check that the Gaussian integrals \eqref{QeQm} identify $e$ and $p$ as electric and magnetic charges, respectively. These can be taken as two independent parameters/charges characterizing the solution.

In what follows, we shall study possible generalizations of this solution to theories of NLE.

\subsection{Generalized static solutions}

{Let us consider the following ansatz for the spacetime geometry and vector {potential:
\ba\label{Bertotti-metric}
ds^2&=&h(y)\bigl(-dt^2+dy^2\bigr)+R_0^2d\Omega^2\,,\nonumber\\
A&=&\phi(y) dt+p \cos\theta d\varphi\,,
\ea
}
where the parameter $p$ is related to the magnetic charge, and $\phi$ is a scalar (electric) potential. In here, we have used  the fact that all 2-dimensional metrics are conformally flat, as well as focused on the static case with  $h=h(y)$ and $\phi=\phi(y)$.}\footnote{The time dependent case, $h(t,y)$ and $\phi=\phi(t,y)$ can in principle also be considered. W leave this for future studies. 
}

{
The two electromagnetic invariants then read 
\be 
{\cal S}=\frac{p^2}{R_0^4}-\frac{\phi'^2}{h^2}\,,\quad {\cal P}=\frac{2p\phi'}{h R_0^2}\,,
\ee 
where we have denoted $'=\frac{d}{dy}$\,. In what follows, we will assume that for a generic NLE model $\cal L(\cal S, {\cal P})$ both such invariants are (similar to the Maxwell case) constant.}\footnote{In fact, for a generic NLE model this `assumption' is actually a consequence of the corresponding equations of motion. 
}
Such an assumption then yields
\be
{\cal S}={\cal S}_0=\frac{p^2-\psi_0^2}{R^4_0}\ ,\quad 
{\cal P}={\cal P}_0=\frac{2\,\psi_0\,p}{R^4_0}\,,
\ee
where $\psi_0$ is a constant, and we  
have the following formula for the  scalar potential: 
\be\label{potential-Bertotti}
\phi(y)= \frac{\psi_0}{R_0^2}\int h(y) dy\,.
\ee
}


{Since both invariants ${\cal S}$ and ${\cal P}$ are constant, so is the Lagrangian itself, and consequently also the components of the energy momentum tensor $T^\mu{}_\nu$.} Specifically, we have $8\pi T^{\mu}{}_{\nu}=diag(-T_0, -T_0, T_1,	T_1)$ (with $T_0$ and $T_1$ being constants).  Considering the form of electromagnetic potential \eqref{Bertotti-metric}, 
\eqref{potential-Bertotti}  these constants read 
{
\ba\label{T-Bertotti}
T_0&=&-2\,{\cal L} - \frac{4\,\psi_{0}^{2}}{R_{0}^{4}}\,{\cal L}_{\cal S} + 2{\cal P}\,{\cal L}_{\cal P}\,, \nonumber\\
T_1&=&2\,{\cal L} - \frac{4\,p^{2}}{R_{0}^{4}}{\cal L}_{\cal S} - 2\,{\cal P}\,{\cal L}_{\cal P}\,,
\ea
}
In order to find the metric, we have to solve the Einstein field equations. Starting from  $G^{t}{}_{t}={8\pi} T^{t}{}_{t}$, where $G^{t}{}_{t}=G^{y}{}_{y}=-\frac{1}{R^2_0}$, we get 
\be \label{BR-EEtt}
T_0=\frac{1}{R^2_0}\,.
\ee
From the above equation it is clear that $T_0>0$, which is in agreement with the weak energy condition (WEC: $T_0>0$ and ${\cal L}_{\cal S}<0$). Moreover, from \eqref{T-Bertotti}, we always have $T_1>-T_0$ for ${\cal L}_{\cal S}<0$ .

The last remaining Einstein equation is $G^{\theta}{}_{\theta}-{8\pi} T^{\theta}{}_{\theta}=0$, with 
\ba
G^{\theta}{}_{\theta}=G^{\varphi}{}_{\varphi}=\frac{h\,h_{,yy}-h^2_{,y}}{2h^3}\,,
\ea
and $8\pi T^{\theta}{}_{\theta}=T_1$. Introducing $(\log{h})_{,y}$ as a new function, one can reduce the order and integrate the equation to obtain
\be \label{BR-constant invariant}
\pm \int \frac{dh}{2h\,\sqrt{h\,T_1\pm h_1}}=y+h_0\,,
\ee
where  $h_0$ and $h_1>0$ are integration constants. The above integral has three different solutions depending on the sign in the square root and the value of constant $h_1$  
\ba
h=
\begin{cases}\label{f-cases}
\frac{[\tanh \sqrt{h_1}({y+h_0})]^2-1}{T_1/h_1}\,, \quad \quad (+)\\
\frac{[\tan \sqrt{h_1}({y+h_0})]^2+1}{T_1/h_1}\,,\,\,\, \quad \quad (-)\\
\frac{1}{T_1\,(y+h_0)^2}\,,\quad \quad \quad \quad \quad  h_1=0\ .
\end{cases}
\ea
These solutions are insensitive to the overall sign in front of the integral in \eqref{BR-constant invariant}. The first solution is negative for all values of $y$ unless the energy momentum component $T_1$ is negative.\footnote{ Note that the strong energy condition (SEC) leads to $T_1>0$.} Negative $h$ would correspond to the time-dependent (cosmological) metric which we ruled out at the beginning. The second solution would be interesting for certain range of $y$ only. The final one resembles the Bertotti--Robinson solution \eqref{Bertotti-Robinson} if we set $h_0=0$, {which can be achieved without loss of generality by a coordinate transformation that preserves \eqref{Bertotti-metric} and \eqref{potential-Bertotti}. In what follows, we focus on this case.} 

{Thus, for any NLE, we have the following solution:
\ba \label{final-metric-Bertotti}
ds^2&=&\frac{1}{T_1\,y^2}(-dt^2+{dy^2})+\frac{1}{T_0}\,d\Omega^2\,,\nonumber\\
A&=&-{\frac{e}{y}dt}+p\cos\theta d\varphi\,,
\ea
where {we have introduced the electric charge parameter  $e\equiv \psi_0/(R_0^2 T_1)$, in terms of the previous parameters. 
The solution \eqref{final-metric-Bertotti} is characterized by 2 {\em independent} parameters. Namely, it is described by 4 constants, 
\be \label{4-constants}
\{T_0, T_1, e, p\}\,,
\ee
which are subject to two implicit constraints \eqref{T-Bertotti}, which now read 
{
\ba\label{T-Bertotti2}
T_0&=&-2{\cal L}-4e^2T_1^2{\cal L}_{\cal S}+2{\cal P}\,{\cal L}_{\cal P}\,, \nonumber\\
T_1&=&2{\cal L}-4p^2T_0^2{\cal L}_{\cal S}-2{\cal P}\,{\cal L}_{\cal P}\,,
\ea
}
where 
\be 
{\cal S}=p^2T_0^2-e^2T_1^2\,,\quad {\cal P}=2ep T_0 T_1\,.
\ee 
The two electromagnetic parameters $e$ and $p$ are then related to the electric and magnetic charges via the Gaussian integrals \eqref{QeQm}, namely\footnote{
Note the (perhaps puzzling) appearance of parameter $p$ in the electric charge formula \eqref{QeQmBR}; see also the explicit examples below. 
}
\be\label{QeQmBR} 
Q_e=-2e{\cal L}_{\cal S}\frac{T_1}{T_0}+2p{\cal L}_{\cal P}\,,\quad Q_m=p\,.
\ee 

The above spacetime \eqref{final-metric-Bertotti}
has the structure of a Bertotti--Robinson universe, since it is a direct product of the two-sphere and two-dimensional anti-de Sitter spacetime. Namely, $T_0$ determines the radius of $S_2$ according to \eqref{BR-EEtt}, and the explicit relation between $T_1$ and the radius of curvature of $AdS_{2}$  is given by 
$T_1={1}/{\ell_2^2}$. We call it a  {\em generalized Bertotti--Robinson} solution from now on. Whereas for the Maxwell theory, the energy momentum components have the same values: ${T_0=T_1}$, and thence both curvature radii of the direct product factors are the same, for NLE theories these radii are in general different. Moreover, the spacetime is no longer conformally flat unless $T_0=T_1$ (the Weyl tensor is proportional to $T_0-T_1$). 
Based on the results we can state the following observation:
\begin{theorem}\label{theorem2} Conformally invariant NLE leads to equal radii for $AdS_{2}$ and $S_{2}$ and therefore provides conformally flat solution (standard Bertotti--Robinson universe).
\end{theorem}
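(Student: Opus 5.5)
The plan is to show that conformal invariance forces the energy-momentum tensor to be traceless, and then to read off $T_0=T_1$ from the diagonal structure obtained above. A conformally invariant NLE Lagrangian ${\cal L}({\cal S},{\cal P})$ must be homogeneous of degree one in the invariants. Under a Weyl rescaling $g_{\mu\nu}\to\Omega^2 g_{\mu\nu}$ with $F_{\mu\nu}$ fixed, we have ${\cal S}\to\Omega^{-4}{\cal S}$, ${\cal P}\to\Omega^{-4}{\cal P}$ and $\sqrt{-g}\to\Omega^4\sqrt{-g}$. Invariance of $\sqrt{-g}\,{\cal L}$ for constant $\Omega$ therefore requires ${\cal L}(\lambda{\cal S},\lambda{\cal P})=\lambda{\cal L}({\cal S},{\cal P})$. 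By Euler's theorem this gives
\be
{\cal S}{\cal L}_{\cal S}+{\cal P}{\cal L}_{\cal P}={\cal L}\,.
\ee
ModMax and Maxwell satisfy this explicitly, which I would note as a check.

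Next I would compute the trace of \eqref{Tmunu}:
\be
-4\pi T^\mu{}_\mu=2F^{\mu\sigma}F_{\mu\sigma}{\cal L}_{\cal S}+4{\cal P}{\cal L}_{\cal P}-4{\cal L}=4\bigl({\cal S}{\cal L}_{\cal S}+{\cal P}{\cal L}_{\cal P}-{\cal L}\bigr)\,.
\ee
This vanishes by the homogeneity relation. Alternatively, tracelessness follows directly from the Weyl invariance of the matter action, since $T^\mu{}_\mu\propto g^{\mu\nu}\delta I/\delta g^{\mu\nu}$.

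Then I would apply this to the generalized Bertotti--Robinson ansatz, where $8\pi T^\mu{}_\nu=\mathrm{diag}(-T_0,-T_0,T_1,T_1)$. Tracelessness gives $-2T_0+2T_1=0$, so $T_0=T_1$. As a consistency check, subtracting the two relations in \eqref{T-Bertotti2} gives $T_1-T_0=4{\cal L}-4{\cal L}_{\cal S}(p^2T_0^2-e^2T_1^2)-4{\cal P}{\cal L}_{\cal P}=4({\cal L}-{\cal S}{\cal L}_{\cal S}-{\cal P}{\cal L}_{\cal P})$, which vanishes by Euler's relation. This reproduces the same conclusion without reference to the general trace. By \eqref{BR-EEtt} and $T_1=1/\ell_2^2$, the two radii $R_0$ and $\ell_2$ coincide. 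Since the Weyl tensor is proportional to $T_0-T_1$, the metric is conformally flat and reduces to \eqref{Bertotti-Robinson} up to the parameter identification.

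The main obstacle is the first step: fixing precisely what "conformally invariant NLE" means and justifying that it implies degree-one homogeneity. I would argue via constant Weyl rescalings, where invariance of the action requires exact homogeneity. A subtlety is that non-analytic terms like $\sqrt{{\cal S}^2+{\cal P}^2}$ are allowed, so homogeneity, rather than linearity, is the correct characterization. Everything after that is an algebraic identity.
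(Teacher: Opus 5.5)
Your proposal is correct and follows essentially the same route as the paper: tracelessness of the energy-momentum tensor for a conformally invariant NLE forces $T_0=T_1$ (the paper reaches this through ${\cal R}=-8\pi T=2(T_0-T_1)$), and the Weyl tensor, being proportional to $T_0-T_1$, then vanishes. Your Euler-homogeneity derivation of ${\cal S}{\cal L}_{\cal S}+{\cal P}{\cal L}_{\cal P}={\cal L}$, and the direct subtraction of the two constraints giving $T_1-T_0=4({\cal L}-{\cal S}{\cal L}_{\cal S}-{\cal P}{\cal L}_{\cal P})$, make explicit the tracelessness step that the paper simply asserts.
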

\begin{proof}
{The Ricci scalar for \eqref{final-metric-Bertotti} and the trace $T$ of energy momentum tensor \eqref{T-Bertotti} satisfy ${\cal R}=-8\pi T=2\left(T_0-T_1\right)$. Conformally invariant NLEs have vanishing trace of the energy momentum tensor, which leads to $T_0-T_1=0$. Since, as already mentioned, the Weyl tensor is also proportional to $T_0-T_1$, it automatically vanishes and we end up with conformally flat geometry.}
\end{proof}
In particular, this means that the ModMax theory leads to a conformally flat Bertotti--Robinson spacetime, though with different values of radii when compared to the Maxwell case. 

We illustrate the above results for couple of NLE models 
in Table~\ref{Table:BR}. Here, $T_0$ and $T_1$ are calculated from \eqref{T-Bertotti2},
and the charge $Q_e$ is determined from \eqref{QeQmBR}. For the RegMax and Frolov--Hayward electrodynamics the full analytic solution of  \eqref{T-Bertotti2} is not possible; in these cases we provide the leading order correction to Maxwell, expanding in the due NLE parameter.

Note also that for the Born--Infeld case, we have the following bound on the existence of generalized Bertotti--Robinson solution: \be
4b^2(Q_e^2+Q_m^2)> 1\,,
\ee
coming from positivity of $T_0$. As we shall see in the next section, this bound corresponds to the non-existence of weakly charged extremal black holes in this theory.






\renewcommand{\arraystretch}{1.5}

\begin{center}
\setlength{\tabcolsep}{6pt}
\begin{table*}[t!]

\begin{tabular}{ |c||c|c|c|}
 \hline
   \mbox{NLE theory}& $T_0$ & $T_1$& $Q_e$ \\
 \hline
 \hline
 \mbox{Maxwell} & $T_0^{\mbox{\tiny M}}=\frac{1}{e^2+p^2}$   &$T_1^{\mbox{\tiny M}}=\frac{1}{e^2+p^2}$&   $\frac{T_1 e}{T_0}=e$ \\
 \hline
 \mbox{ModMax} & $\frac{\exp(\gamma)}{e^2+p^2 \exp(2\gamma)}$   &$\frac{\exp(\gamma)}{e^2+p^2 \exp(2\gamma)}$&   $\frac{T_1 e}{T_0} \exp(-\gamma)=e \exp(-\gamma)$ \\
 \hline
 \mbox{Born--Infeld}  & $\frac{4b^2}{4b^2(e^2+p^2)-1}$   & $\frac{4b^2}{4b^2(e^2+p^2)+1}$ &  $\frac{T_1e}{T_0}\sqrt{\frac{b^2+p^2T_0^2}{b^2-e^2T_1^2}}=e$  \\\hline
  \mbox{RegMax} &  $T_0^{\mbox{\tiny M}}-\frac{4(e^2-p^2)^{1/4}(e^4+7e^2p^2+2p^4)}{5\alpha(e^2+p^2)^{7/2}}+\dots $  & $T_1^{\mbox{\tiny M}}-\frac{4(e^2-p^2)^{\frac{1}{4}}(2e^4+7e^2p^2+p^4)}{5\alpha(e^2+p^2)^{\frac{7}{2}}}+\dots$&  $\frac{T_1e}{T_0}\frac{\alpha}{(\alpha-(T_1^2e^2-T_0^2p^2)^{1/4})^2}=e$\\
  &&&$+\frac{2e(3e^2+7p^2)(e^2-p^2)^{\frac{1}{4}}}{5\alpha(e^2+p^2)^{\frac{3}{2}}}+\dots$ \\
 \hline
\mbox{Frolov--Hayward} &  $T_0^{\mbox{\tiny M}}-\frac{2p^2\sqrt{e^2-p^2}(5e^2+p^2)}{3\beta^2(e^2+p^2)^{4}}+\dots $  & $T_1^{\mbox{\tiny M}}-\frac{2e^2\sqrt{e^2-p^2}(e^2+5p^2)}{3\beta^2(e^2+p^2)^4}+\dots$&  $\frac{T_1e}{T_0}\frac{\beta^2}{\beta^2-\sqrt{T_1^2e^2-T_0^2p^2}}=e$\\
  &&&$+\frac{e(e^2+5p^2)\sqrt{e^2-p^2}}{3\beta^2(e^2+p^2)^2}+\dots$ \\
 \hline
 \mbox{Maxwell-$\Lambda$} &  $\Lambda+\frac{1 {-4\Lambda p^2} + \sqrt{(1+4\Lambda e^2)(1-4\Lambda p^2)}}{2(e^2+p^2)} $  & $\Lambda+\frac{1 {-4\Lambda p^2} + \sqrt{(1+4\Lambda e^2)(1-4\Lambda p^2)}}{2(e^2+p^2)}$&  $\frac{T_1-\Lambda}{T_0+\Lambda}e=e-2e(e^2+p^2)\Lambda$\\
  &&&$\qquad +O(\Lambda^2)$ \\
 \hline
  \mbox{Born--Infeld-$\Lambda$} &  $\frac{2b^2(b^2+2\Lambda b^2(e^2-p^2)+\Lambda^2p^2+B)}{2b^2(e^2+p^2)(2b^2-\Lambda)-B} $  & $\frac{2b^2(b^2+2\Lambda b^2(e^2-p^2)+\Lambda^2p^2+B)}{b^2+4b^4(e^2+p^2)-4\Lambda b^2p^2+\Lambda^2p^2}$&  $\frac{T_1-\Lambda}{T_0+\Lambda}e\sqrt{\frac{b^2+p^2(T_0+\Lambda)^2}{b^2-e^2(T_1-\Lambda)^2}}$\\
  \hline
\end{tabular}
\caption{
{\bf Examples of NLE-BR spacetimes.} In this table we gather a couple of examples of Bertotti--Robinson spacetimes for concrete NLE models, namely, for the Maxwell, ModMax, Born--Infeld, RegMax, and Frolov--Hayward cases. Note that for the latter two Eq.~\ref{T-Bertotti2} cannot be analytically solved and we provide only the leading order correction to Maxwell, expanding in the due NLE parameter. Note also that in these cases the electric charge $Q_e$ depends on the parameter $p$.  In the last two rows we have also included the $\Lambda$ case for the Maxwell and Born--Infeld theories, where for the latter we have introduced the following shorthand: $B\equiv\sqrt{(b^2+4\Lambda b^2e^2-\Lambda^2e^2)(b^2-4\Lambda b^2p^2+\Lambda^2p^2)}$. Obviously, for small enough $\Lambda<0$ such a quantity is positive. 
}
\label{Table:BR}
\end{table*}
\end{center}

\subsection{Inclusion of the cosmological constant}
Let us also briefly comment on the extension of the above results when we consider the cosmological constant. 

\subsubsection{$\Lambda$  only}

First, we will consider the cosmological constant without any NLE. In that case, the energy momentum tensor (when treating the cosmological constant as a source) is  $8\pi\, T^{\mu}{}_{\nu}=diag(-\Lambda, -\Lambda, -\Lambda, -\Lambda)$.
{Employing again the metric ansatz \eqref{Bertotti-metric}, 
from}  $G^{t}{}_{t}=-\Lambda$ we get 
\be 
\frac{1}{R_0^2}=\Lambda\,,
\ee
which implies $\Lambda>0$.} 
From $G^{\theta}{}_{\theta}+\Lambda=0$ we then  obtain  
\be \label{BR-cosmo}
\pm \int \frac{dh}{2h\,\sqrt{ h_1-h\,\Lambda}}=y+h_0\,.
\ee

Considering $h>0$, we necessarily need $h_1\geq \max(h)\,\Lambda>0$,  and thus only have the following solution:
\be
h=-\frac{[\tanh \sqrt{h_1}({y+h_0})]^2-1}{\Lambda/h_1}\ .
\ee 
Computing the maximum of this $h$ plus the inequality for $h_1$ above we arrive at condition $h_1\geq 1$.

If we want to consider other possibilities as in \eqref{f-cases}, we need to assume $h<0$. In our previously favored case $h_1=0$ this leads to (considering $h_0=0$)
\be\label{f-Nariai}
h=-\frac{1}{\Lambda\,y^2}.
\ee
Now $y$ is a time coordinate and the solution corresponds to a two-dimensional de Sitter space in $t,y$ coordinates, while the complete spacetime (setting 
$\Lambda\equiv 3/{\ell^2}$) 
\be\label{Narii}
ds^2=\frac{\ell^2}{3 y^2}\bigl(-dy^2+dt^2\bigr)+\frac{\ell^2}{3}d\Omega^2\,,
\ee 
is the {\em Nariai solution} of the $dS_{2}\times S_{2}$ form \cite{nariai1951new}.

\subsubsection{NLE with $\Lambda$}

If we want to add a cosmological constant to the NLE source analyzed in the previous subsection, one can achieve this via the following mapping:
\be
T_0 \to T_0 + \Lambda \ ,\quad T_1 \to T_1 - \Lambda\ .
\ee
Explicitly, from  $G^{t}{}_{t}=-\Lambda-T_0$ we get 
\be 
\frac{1}{R_0^2}=\Lambda+T_0\,.
\ee
Therefore, one needs to satisfy $\Lambda+T_0>0$ (so we can also have  $\Lambda <0$). And from $G^{\theta}{}_{\theta}+\Lambda-T_1=0$ we obtain
\ba
h=
\begin{cases}\label{fLambda-cases}
\frac{[\tanh \sqrt{h_1}({y+h_0})]^2-1}{(T_1-\Lambda)/h_1}\,, \quad \quad (+)\\
\frac{[\tan \sqrt{h_1}({y+h_0})]^2+1}{(T_1-\Lambda)/h_1}\,,\,\,\, \quad \quad (-)\\
\frac{1}{(T_1-\Lambda)(y+h_0)^2}\,,\quad \quad \quad \ h_1=0\ .
\end{cases}
\ea
All these solutions are possible only when $T_1-\Lambda>0$,  if $h>0$ is desired. Analyzing all possible cases for values of $T_0, T_1$ and $\Lambda$ is beyond the scope of our objectives here. {Focusing again on the $h_1=0=h_0$ case, we get the following solution:
\ba \label{final-metric-Bertotti_Lambda}
ds^2&=&\frac{1}{(T_1-\Lambda)y^2}(-dt^2+{dy^2})+\frac{1}{T_0+\Lambda}\,d\Omega^2\,,\nonumber\\
A&=&-\frac{e}{y}dt+p\cos\theta d\varphi\,.
\ea
Note that inclusion of the cosmological constant would make radii of $AdS_{2}/dS_2$ and $S_{2}$ different even for conformal theories, e.g. Maxwell. 
Obviously, Theorem \ref{theorem2} no longer generalizes to the case with cosmological constant; the presence of the cosmological scale  breaks the conformal symmetry.

In the above, $T_0$ and $T_1$ are given by $\Lambda$ generalization of \eqref{T-Bertotti2}, namely 
\ba\label{T-Bertotti2Lambda}
T_0&=&-2{\cal L}-4e^2(T_1-\Lambda)^2{\cal L}_{\cal S}+2{\cal P}\,{\cal L}_{\cal P}\,, \nonumber\\
T_1&=&2{\cal L}-4p^2(T_0+\Lambda)^2{\cal L}_{\cal S}-2{\cal P}\,{\cal L}_{\cal P}\,,
\ea
where now 
\ba
{\cal S}&=&p^2(T_0+\Lambda)^2-e^2(T_1-\Lambda)^2\,,\nonumber\\
{\cal P}&=&2ep (T_0+\Lambda)( T_1-\Lambda)\,.
\ea 
Similarly, the charges now read
\be\label{QeQmBRLambda} 
Q_e=-2e{\cal L}_{\cal S}\frac{T_1-\Lambda}{T_0+\Lambda}+2p{\cal L}_{\cal P}\,,\quad Q_m=p\,.
\ee 

\subsubsection{Maxwell example}

For example, for the Einstein--Maxwell-$\Lambda$ theory, we get 
\be\label{T0T1MLambda}
T_0=T_1=\Lambda+\frac{1 {-4\Lambda p^2} \pm \sqrt{(1-4\Lambda p^2)(1+4e^2\Lambda)}}{2(e^2+p^2)}\,,
\ee
and the charge reads 
\be \label{e-chargeMLambda}
Q_e=\frac{T_1-\Lambda}{T_0+\Lambda}e\,.
\ee 
The plus roots then recover the Maxwell case in the limit $\Lambda\to 0$.

Let us consider this root and set for simplicity $p=0$. Then we can invert Eq.~\ref{e-chargeMLambda}, obtaining 
\be 
e=\pm \frac{Q_e}{\sqrt{1-4\Lambda Q_e^2}}\,.
\ee 
Plugging this to expressions \eqref{T0T1MLambda}, we obtain the following formulae:
\ba 
T_0+\Lambda&=&\frac{1+\sqrt{1-4\Lambda Q_e^2}}{2Q_e^2}\,,\nonumber\\
T_1-\Lambda&=&\frac{\sqrt{1-4\Lambda Q_e^2}+1-4\Lambda Q_e^2}{2Q_e^2}\,.
\ea 
Obviously, when $\Lambda<0$ (anti de Sitter case), the above generalized Bertotti--Robinson solution always exists. On the other hand, when $\Lambda>0$ (de Sitter case), we get the following restriction:
\be 
|Q_e|\leq \frac{1}{2\sqrt{\Lambda}}\,.
\ee 
This is the Nariai bound.

Interestingly, Eqs.~\eqref{T-Bertotti2Lambda} can also be analytically solved for the Born--Infeld-$\Lambda$ theory -- we provide the corresponding solution in Table~\ref{Table:BR}, leaving its detailed analysis for future studies.

\section{Extremal black holes and near-horizon geometry}\label{Sec:4.5}

\subsection{Near-horizon limit}
It is a well known fact that the near-horizon limit  
of an extremal Reissner--Nordstr\"{o}m solution leads to the Bertotti--Robinson universe \eqref{Bertotti-Robinson}. It is therefore reasonable to expect that similarly a near horizon limit of an extremal NLE charged black hole (if it exists) should lead to the generalized Bertotti--Robinson solution derived above.

To see that  this is indeed the case,
let us assume the existence of an extremal black hole of the form \eqref{metric-SSS}. Such a hole is characterized by a double root of the metric function $f$, \eqref{f_sss}, on the horizon located at $r=R_0>0$. Thus, $f$ has the following expansion: 
\be \label{NearHorizon-f}
f=a_1\,(r-R_0)^2 + O\left((r-R_0)^3\right)\,,
\ee 
where $a_1=\frac{1}{2}f''(R_0)$ is a constant. Taking into account only the 
leading order terms in metric coefficients and introducing new coordinate $u=r-R_0$, we arrive at the following metric: 
\be
ds^2=-{a_{1}\,u^2}dt^2+\frac{1}{a_{1}\,u^2}{du^2}+R_{0}^{2}\,d\Omega^2\,.
\ee
This is a standard form of near-horizon geometry of an extremal black hole that can  easily be transformed into the generalized Bertotti--Robinson form by introducing $y=\frac{1}{a_{1}\,u}$, thus obtaining
\be
ds^2=\frac{1}{a_{1}\,y^2}(-dt^2+{dy^2})+R_{0}^{2}\,d\Omega^2\,.
\ee
Comparing this with \eqref{final-metric-Bertotti} then leads to 
\be\label{identification}
T_0=\frac{1}{R_{0}^{2}}\ ,\quad T_1=a_{1}\ .
\ee
These two quantities can be easily derived for any NLE. Although $T_0$ remains always `the same', $T_1$ is equal to the quadratic coefficient of expansion of $f$. Note that the requirements
\be 
R_0>0\,,\quad a_1>0\,,
\ee 
may impose interesting restrictions on the existence of extremal black holes in a given 
NLE theory, as recently observed in \cite{Hale:2025urg} for the case of Born--Infeld theory.

Note also, that by expanding the scalar potential $\phi(r)$ as $\phi(r)=\phi(R_0)+\phi'(R_0)(r-R_0)+O((r-R_0)^2)$, and using the above change of variables, the (static) vector potential \eqref{general-A} now reads
\be 
A=\frac{\phi'(R_0)}{T_1 y}dt+{p\cos\theta d\varphi}\,,
\ee 
{where we have dropped the trivial gauge term. This  
coincides with the vector potential in \eqref{final-metric-Bertotti}, upon identifying   
\be \label{phi'}
\phi'(R_0)=-eT_1\,.
\ee 
This identification maps the bulk black hole charge $Q$ to the Bertotti--Robinson charge $Q_e$ and provides further consistency check for the near horizon limit.  


\subsection{Examples}

Let us provide a couple of examples for specific NLE theories, to illustrate how the above restriction may concretely appear. For simplicity, in all examples we switch off the magnetic charge, $p=0$, and assume that the bulk charge $Q$ is positive.

\subsubsection{Maxwell}
To warm up, let us start with the Maxwell theory. The corresponding spherically symmetric metric function $f$ {and potential $\phi$ are}
\be 
f=1-\frac{2M}{r}+\frac{Q^2}{r^2}\,,\quad \phi=-\frac{Q}{r}\,.
\ee 
Demanding double root at $r=R_0$, imposes two restrictions on the   parameters $\{Q, M, R_0\}$. This yields 
\be \label{NearHorizon-Maxwell}
R_0=Q\,,\quad a_1=\frac{1}{Q^2}\,.
\ee 
We thus recovered $T_1=T_0=1/{Q^2}$. 
At the same time, one has 
$\phi'(R_0)=Q/R_0^2=1/Q$. In order this agrees with \eqref{phi'}, we have to identify 
\be\label{QeQ} 
Q_e=e=-Q\,,
\ee 
where the first equality follows from Table~\ref{Table:BR}.\footnote{The fact that the bulk $Q$ and the Bertotti--Robinson $Q_e$ charges have opposite signs is related to the opposite orientation of the surface elements in the due Gaussian integrals.} 

Obviously, there is no 
restriction on the existence of extremal black holes in this case -- extremal black holes will exist for any value of the black hole charge $Q>0$.



\subsubsection{Born--Infeld}
For the  Born--Infeld theory, the static metric function {$f$ and the potential $\phi$ take the following form:
\ba \label{SSS-BI}
f&=&1-\frac{2M}{r}+\frac{2b^2}{3}r^2-\frac{2b^2}{r}\int \sqrt{r^4+\frac{Q^2}{b^2}}dr\,,\nonumber\\
\phi&=&-\int \frac{Q\,dr}{\sqrt{r^4+Q^2/b^2}}\,.
\ea
}
The vanishing of $f(R_0)$ and $f'(R_0)$ imposes two constraints on the parameters $\{M, Q, R_0, b\}$. For fixed $Q$ and $b$, these yield
\be \label{formulasBI}
a_1=\frac{4b^2}{4Q^2b^2+1}\,,\quad R_0=\frac{\sqrt{4Q^2b^2-1}}{2b}\,.
\ee 
At the same time, we find 
\be 
\phi'(R_0)=\frac{Q}{\sqrt{R_0^4+Q^2/b^2}}=\frac{4Qb^2}{4Q^2b^2+1}\,.
\ee 
Using \eqref{phi'} and \eqref{identification} we thus identify the relation between the bulk and Bertotti--Robinson charges as in \eqref{QeQ}.

Obviously, $R_0$ only exists when 
\be 
Q>Q_\star=\frac{1}{2b}\,.
\ee 
This is the bound studied in \cite{Hale:2025urg}. This indicates that there exists a charge gap for the existence of extremal Born--Infeld black holes. Namely, weakly charged Born--Infeld black holes do not have an inner Cauchy horizon and therefore cannot be extremal. 
Note that in the Maxwell limit, $b\to\infty$, this bound vanishes, and we recover $R_0=Q$ and $T_1=T_0=1/Q^2$,  as expected.

\subsubsection{RegMax}
For the RegMax theory 
the metric function takes the following form \cite{Hale:2023dpf}:
\ba \label{SSS-RegMax}
f&=&1-2\alpha^2Q+\frac{4\alpha Q^{3/2}}{3r}-\frac{2M}{r}+4r\alpha^3\sqrt{Q}\nonumber\\
&&-4\alpha^4r^2\log\Bigl(1+\frac{\sqrt{Q}}{r\alpha}\Bigr)\,.
\ea 
The vanishing of $f(R_0)$ and $f'(R_0)$ again imposes two constraints on the parameters $\{M,Q,R_0, \alpha\}$. However, in this case, the presence of logarithmic terms does not allow one to analytically solve for $M$ and $R_0$ to produce formulas analogous to \eqref{formulasBI}. However, one can express the logarithmic term from one of the equations. This yields the following formula for the coefficient $a_1$: 
\be 
a_1=\frac{2\alpha^2Q^2-(R_0\alpha+\sqrt{Q})^2}{R^2_0(R_0\alpha+\sqrt{Q})^2}\,.
\ee 
Here, it is to be understood that $R_0=R_0(\alpha,Q)$. Interestingly, even without the analytical knowledge of this expression, we can proceed as follows.  
Demanding that $a_1$ is positive requires positive nominator. In particular, for $R_0\to 0$, this yields 
\be 
Q>Q_\star=\frac{1}{2\alpha^2}\,.
\ee 
This is precisely the charge gap for the existence of RegMax extremal black holes identified first in \cite{Hale:2023dpf} and re-derived here using the
near horizon limit.


\subsubsection{Frolov--Hayward electrodynamics}
{Let us finally turn to the black holes in the recently discovered  Frolov--Hayward electrodynamics. We derive the corresponding spherical solution in Appendix~\ref{App} -- the metric function reads (assuming $Q>0$): 
\ba\label{f-Frolov}
f&=&1-\frac{2\beta^2Q}{3}-\frac{2M}{r}+\frac{4\beta Q^{3/2}}{3r}\arcCot\Bigl(\frac{r\beta}{\sqrt{Q}}\Bigr)\nonumber\\
&&
+\frac{2\beta^4 r^2}{3}\log\left(1+\frac{Q}{\beta^2\,r^2}\right)\,. 
\ea
This then yields
\be 
a_1=\frac{\beta^2(2Q^2-R_0^2)-Q}{R^2_0\,(R_0^2\beta^2+Q)}\,.
\ee 
Demanding that $a_1$ is positive requires positive nominator. In particular, for $R_0\to 0$, this yields 
\be 
Q>Q_\star=\frac{1}{2\beta^2}\,,
\ee 
which agrees with the bound \eqref{Frolov_bound_App} obtained in Appendix~\ref{App} for the existence of Frolov--Hayward extremal black holes.

\subsection{Inclusion of cosmological constant}
The cosmological constant can be trivially added to all spherically symmetric NLE solutions \eqref{metric-SSS}. However, positive $\Lambda$ brings an additional (cosmological) horizon  which forces one to specify which horizons coincide to form an extremal solution. If the inner and outer black hole horizons coincide, we obtain the solution corresponding to the generalized Bertotti--Robinson spacetime. If the outer and cosmological horizons coincide (resulting in negative $f$ near the horizon), we have the Nariai case. {In both cases, the solution can be cast in 
the form \eqref{final-metric-Bertotti_Lambda}, where the curvature radii of $AdS_{2}/dS_2$ and $S_{2}$ are no longer equal and read:}
\be \label{NHLLambda}
\frac{1}{R_0^2}=\Lambda+T_0\,,\quad    a_1=T_1-\Lambda\,.
\ee

{To see this explicitly, let us consider the charged AdS black hole in Maxwell electrodynamics. Such a case was studied in detail in \cite{Romans:1991nq}, and we review it here only for illustration. Namely, in this case the metric function and the potential simply read  
\be 
f=1-\frac{2M}{r}+\frac{Q^2}{r^2}-\frac{\Lambda}{3} r^2\,,\quad 
\phi=-\frac{Q}{r}\,.
\ee 
Demanding double root at $R_0$, yields
\ba 
R_0&=&\sqrt{\frac{1-\sqrt{1-4\Lambda Q^2}}{2\Lambda}}\,,\nonumber\\
a_1&=&\frac{2\Lambda(4Q^2\Lambda-1+\sqrt{1-4Q^2\Lambda})}{(1-\sqrt{1-4Q^2\Lambda})^2}\,.
\ea 
Moreover, employing the identification \eqref{NHLLambda} then yields
\be 
T_0=T_1=-\Lambda+\frac{1+\sqrt{1-4Q^2\Lambda}}{2Q^2}\,,
\ee 
which agrees with \eqref{T0T1MLambda}, upon employing 
\eqref{e-chargeMLambda},  considering $Q_{e}=-Q$, and setting $p=0$.

\subsection{Linear stability of extremal black holes}
Inspired by \cite{Maciej2023},  let us briefly study  linear stability of extremal black holes obtained in previous sections employing a massless scalar field $\Phi$.
Using the spherical symmetry of the background spacetime 
\eqref{metric-SSS}, we expand $\Phi$ into the spherical harmonics,
\be
\Phi=\sum_{\ell, m}\Phi_{\ell m}Y_{\ell m}\,,
\ee
upon which the corresponding wave equation reduces to
\be\label{Klein-gordon}
f\,\Phi^{''}_{\ell m}+\frac{\left(2\,f+r\,f^{'}\right)}{r}\Phi^{'}_{\ell m}-\frac{\ell(\ell+1)}{r^2}\Phi_{\ell m}=0\,.
\ee
Since we are interested in the near-horizon region, as discussed before, the metric function can be approximated by its near-horizon expansion, $f=a_1\,(r-R_0)^2+\dots$, where $a_1>0$. This turns the above wave equation \eqref{Klein-gordon} into an Euler type equation and we can seek the scalar field in the form $\Phi \approx (r-R_0)^{\gamma_{\pm}}$, giving 
\be
\gamma_{\pm}=-\frac{1}{2}\left(1\pm \sqrt{1+\frac{4\ell(\ell+1)}{a_1\,R^2_0}}\right)\,.
\ee

Obviously, $\gamma_{+}$ leads to a singular behavior so we focus on the  $\gamma_{-}$ branch, examining various solutions discussed in the previous sections.

For Maxwell theory (from \eqref{NearHorizon-Maxwell}), we have $a_1\,R^2_0=1$ and $\gamma_{-}=\ell$,  always a positive integer for any $\ell\geq 1$ as is shown in \cite{Maciej2023}.  In general, if we want to satisfy $\gamma_{-}\geq 1$ (which avoids instability as discussed at the end of this section),  we need 
\be\label{l-condition}
a_1 R^2_0 \leq \frac{\ell(\ell+1)}{2}\,.
\ee
For the case of Maxwell plus the cosmological constant we have 
\be 
a_1\,R^2_0=\frac{T_0-\Lambda}{T_0+\Lambda}\,,
\ee
(using that in Maxwell theory we have $T_0=T_1$) and the condition \eqref{l-condition} becomes 
\be 
T_0\left(2-\ell(\ell+1)\right) \leq \left(2+\ell(\ell+1)\right) \Lambda\,.
\ee
For positive $\Lambda>0$ the above relation always holds. But for $\Lambda<0$ it is violated for $\ell=1$ and for $\ell>1$ it incurs a limit on the possible value of $\Lambda$.  Specifically, considering $\ell>1$ and substituting for $T_0$ from \eqref{T0T1MLambda} with $p=0$ we have
\be 
\left(\frac{4\ell(\ell+1)}{\ell(\ell+1)-2}\right)|\Lambda|\leq 1+\sqrt{1-4e^2|\Lambda|}\,.
\ee
This leads to $|\Lambda| \leq \frac{2}{j^2}(j-2e^2)$ where $j=\left(\frac{4\ell(\ell+1)}{\ell(\ell+1)-2}\right)$.

Now let us apply the above analysis to charged black holes in specific NLE models discussed in previous sections. 
For the Born--Infeld model (\eqref{formulasBI}), the expression $a_1 R^2_0=\frac{4Q^2b^2-1}{4Q^2b^2+1}$ is always less than one (note that $Q>\frac{1}{2b}$) which means $\gamma_{-}>1$ for any $\ell\geq 1$. In the Maxwell limit $b\to\infty$ one obtains $\gamma_{-}=1$ for $\ell = 1$ saturating the bound.

For two remaining models, namely RegMax and Frolov--Hayward, the analysis is complicated by the fact that we do not have explicit formulas for $R_0$. But if the following inequality is satisfied for RegMax:
\be\label{RegMax-cond}
\frac{Q^2}{\left(R_0+\sqrt{Q}/\alpha\right)^2} \leq \frac{\ell(\ell+1)+2}{4}\,,
\ee
and the next one for Frolov--Hayward model:
\be\label{Frolov-Hayward-cond}
\frac{2Q^2-R^2_0-Q/\beta^2}{R^2_0+Q/\beta^2} \leq \frac{\ell(\ell+1)+2}{4}\,,
\ee
we again have $\gamma_{-}\geq 1$ for any $\ell\geq 1$. The condition \eqref{RegMax-cond} is strictest for $l=1$ and is exactly satisfied in Maxwell limit $\alpha\to\infty$ since then $R_0=Q$. Analyzing the exact dependence $R_0(\alpha,Q)$ in RegMax model one can see that for finite $\alpha$ we have $R_0+\sqrt{Q}/\alpha>Q$, thus satisfying the inequality. Similar behavior can be shown for Frolov--Hayward model relation \eqref{Frolov-Hayward-cond}, the inequality is sharply satisfied for finite $\beta$ and becomes equality for Maxwell limit $\beta\to\infty$.

Why have we limited ourselves to $\gamma_{-}\geq 1$? If $\gamma_{-}<1$ we have diverging energy momentum tensor of the scalar field (since it contains first derivatives of scalar field) at the horizon indicating instability (namely blow up of certain components of the Ricci tensor). Unlike in the case of extremal Reissner--Nordstr\"{o}m-AdS spacetime \cite{Maciej2023} both Born--Infeld, RegMax and Frolov--Hayward extremal black holes do not suffer from such an instability, although they all share the property of having unequal radii for their near-horizon limit (generalized Bertotti--Robinson solution). 

Although the analysis is only at the linear level of scalar perturbations, it was shown in \cite{Maciej2023} that the above-described potential instability propagates into the full Einstein--Maxwell-AdS picture.

\subsection{Constructing regular particle models}
{The above constructed 
Bertotti--Robinson spacetimes can be used for the construction of regular particle models.}
To achieve this, we glue
two spacetimes in such a way that we replace the black hole region with a completely regular solution, ensuring that no charged matter shell gets induced on the gluing surface -- in other words, ensuring sufficient metric smoothness. Such a particle model for the Einstein--Maxwell theory is provided by gluing the Bertotti--Robinson solution and the extremal Reissner--Nordstr\"{o}m  metric, with a gluing surface identified with the extremal horizon, as described in \cite{ParticleModelZaslavskii}. Later, the same method was applied for the specific NLE model called the Hoffmann--Born--Infeld model in four  \cite{ParticleModelHabib} and three  \cite{ParticleModelTay} dimensions.  Here, we will show that it is possible to glue the due Bertotti--Robinson solution to any NLE extremal black hole. 

To do this, we assume the static spherically symmetric black hole metric \eqref{metric-SSS}, 
\be 
ds^2=-f\,dt^2+\frac{dr^2}{f}+r^2d\Omega^2\,,
\ee 
as an external portion of the spacetime, valid until $r=R_0$, and gluing another spacetime along this surface that will form the inner geometry, using the Israel junction conditions. The metric functions corresponding to inner and outer spacetimes should be continuous at $r=R_0$. Additionally, if the first derivatives are continuous at 
$r=R_0$, we avoid producing any physical sources on the shell. The exact conditions for proper matching can be seen by computing the potentially induced surface energy momentum tensor, $S^{\mu}_{\nu}$ at $r=R_0$. The tensor $S^{\mu}_{\nu}$ is the so-called Lanczos tensor and  can be expressed in terms of the extrinsic curvature tensor \cite{Israel:1966rt, ParticleModelZaslavskii}, 
\be
8\pi\,S^{\mu}_{\nu}=[K^{\mu}_{\nu}]-\delta^{\mu}_{\nu} [K]\,,
\ee
where $K^{\mu}_{\nu}$ is the extrinsic curvature, $K=K^{\mu}_{\mu}$ and $[\ ]=(\ )_+-(\ )_-$ is the jump at the gluing surface. If $[K^{\mu}_{\nu}]=0$, then both regions match smoothly enough and $S^{\mu}_{\nu}=0$.

In the previous section, we showed that all NLE models that give a degenerate horizon provide a metric function of the form $f=a_1(r-R_0)^2 + O((r-R_0)^3)$ \eqref{NearHorizon-f} close to the horizon at $R_0$, where the constants $a_1$ and $R_0$ are specific to a given NLE model and the values of the relevant parameters. Keeping only the dominant term in $f$ we were able to transform the metric into the form of generalized Bertotti--Robinson geometry \eqref{final-metric-Bertotti} characterized by \eqref{identification}. Since the extrinsic curvature for the standard spherically symmetric metric depends only on $f$ and $f'$ it is obvious that for gluing surface identical with the degenerate horizon $r_0=R_0$ of NLE charged black hole we can glue inner generalized Bertotti--Robinson solution with corresponding parameters without introducing any surface matter. Specifically, the evaluation on the horizon gives {$f(R_0)=0=f'(R_0)$ on both sides.} 
One can also easily confirm that the electromagnetic fields from both sides match properly -- thus giving rise to a regular particle model in a given NLE theory.\footnote{{Note that the spacetime is not fully regular, as jumps in higher derivatives across the gluing surface are present; for this reason, we call these models regular particles, rather than regular (extremal) black holes.}  }

\section{ Bonnor--Melvin universe in NLE theories}\label{Sec:5}

Let us turn our attention to the closely related solution that is also filled with a magnetic (electric) field albeit no longer constant -- the Bonnor--Melvin solution \cite{Bonnor:1954tis, Melvin:1963qx, Tahamtan-Melvin}. It represents a bundle of magnetic (electric) field lines held in equilibrium by their electromagnetic repulsion and gravitational attraction. In \cite{Gibbons_2001}, the generalized Bonnor--Melvin universe solution in nonlinear electrodynamics was derived. Here, we shall reformulate the approach to this problem in order to provide more explicit results for NLE models of interest, namely for the RegMax, Born--Infeld, and Frolov--Hayward  theories.

As we shall see the Bonnor--Melvin universe in Maxwell theory is static and axisymmetric, with a regular axis of symmetry located at cylindrical $\rho=0$, and regular fields everywhere in the region $\rho\in[0,\infty)$. In what follows we shall attempt to find a generalization of such a solution to any NLE theory, keeping the above defining properties.

\subsection{Generalized Bonnor--Melvin solution}

Let us consider the cylindrical type coordinates $(t,\rho, z, \varphi)$ and the following ansatz for the metric and the vector potential:
\ba\label{Melvin}
ds^2&=&-\Omega(\rho)^2\,(dt^2-dz^2)+\frac{\rho^2}{f(\rho)}\,d{\rho^2}+f(\rho)\,d\varphi^2\,,\nonumber\\
A&=&\psi(\rho) \, d\varphi\,.
\ea

The {electromagnetic field is {then 
\be 
F=-B(\rho) d\rho \wedge d\varphi\,,
\ee 
where we have introduced the magnetic induction  $B(\rho)\equiv -\psi_{,\rho}$.} The invariants ${\cal S}$ and ${\cal P}$ now read}
\be\label{S_Melvin}
{\cal S}=\frac{B(\rho)^2}{\rho^2}\,,\quad {\cal P}=0\,.
\ee
The vanishing of ${\cal P}$  implies that for parity invariant theories (where ${\cal L}$ depends on ${\cal P}^2)$, we may effectively focus on restricted theories characterized by ${\cal L}({\cal S})\equiv{\cal L}({\cal S}, {\cal P}=0)$.

From the NLE  equations \eqref{FE} it is then possible to find the following expression for ${\cal L}_{\cal S}$:
\be \label{LS-NLE-Melvin}
{\cal L}_{\cal S}=-\frac{ H\,\rho}{\Omega(\rho)^2\,B(\rho)}\,,
\ee
where $H$ is an integration  constant. From the Einstein field equations with arbitrary model of NLE, we then find that
{
\be \label{Omega-gen}
\Omega(\rho)=\omega_1+\omega_2 \rho^2\,,
\ee 
and 
\ba \label{f-Melvin-NLE}
f&=&f_1\Omega(\rho)^2+\frac{f_2}{\Omega(\rho)}\nonumber\\
&&-\frac{4H}{3\omega_2\Omega(\rho)}\,\Bigl(\Omega(\rho)^3\int \frac{B}{\Omega(\rho)^3} \,d\rho -\int{Bd\rho}\Bigr)\,,
\ea
where $\omega_1, \omega_2, f_1$ and $f_2$ are the integration constants, and all field equations are now satisfied.
Here, nonzero $f_1$ is related to the cosmological constant (which can appear as an additive constant in the NLE Lagrangian); in what follows we set $f_1=0$.
Moreover, we have expressed the metric function $f$  in terms of yet unspecified magnetic induction $B$ -- to obtain an explicit solution for a given NLE model, we first have to solve \eqref{LS-NLE-Melvin} for this field.

In particular, for the Maxwell case characterized by ${\cal L}_{\cal S}=-\frac{1}{2}$,
one can make a specific  choice of $\omega_1$ and $\omega_2$ so that $\Omega$ takes the following form \cite{Gibbons_2001}: \ba\label{OmegaH_Gibbons}
{\Omega}=1+H^2{\rho}^2\,,
\ea
which yields
\be \label{B_Gibbons}
B=\frac{2H\rho}{\Omega^2}\,,\quad 
f=\frac{\rho^2}{\Omega^2}\,,
\ee
upon choosing $f_1=0$ and $f_2=1/H^2$\,.
In this case, the solution is regular on $\rho\in (0,\infty)$, with the axis of symmetry located at $\rho=0$.   Moreover,  the  constant $H$ governs   the magnetic field strength on the axis of symmetry, as we have 
\ba 
D&=& -2{\cal L}_{\cal S}F=-\frac{2\rho H}{\Omega(\rho)^2}\,  d\rho \wedge d\varphi\nonumber\\
&=&-\frac{2H}{\Omega(\rho)^2}e^{(\rho)}\wedge e^{(\varphi)}\equiv -D_{(\rho)(\varphi)} e^{(\rho)}\wedge e^{(\varphi)}\,,
\ea
where $e^{(\rho)}=\frac{\rho}{\sqrt{f}}d\rho, \ e^{(\varphi)}=\sqrt{f}d
\varphi$ are the tetrad frames.
Obviously, as $\rho\to 0$, the tetrad components of $D$ approach the value $2H$,  $D_{(\rho)(\varphi)}\to 2H$.
At the same time, identifying $\varphi \sim \varphi+2\pi$, the axis of symmetry is also regular.

On the hand,  as shown in \cite{Gibbons_2001}, where the specific choice \eqref{OmegaH_Gibbons}  for $\Omega$ was adopted universally for any NLE, the above regularity no longer necessarily survives in the NLE case. For example, the Born--Infeld theory yields regular Melvin-type solutions only in a certain parameter regime (see also below). Moreover, even if the regular axis exists, it is no longer necessarily located at $\rho=0$, but rather at $\rho=\rho_a$, given by the largest root 
\be 
f(\rho_a)=0\,.
\ee 
In order such axis be regular, one also has to properly identify the periodicity of $\varphi$. For example, for $f'(\rho_{a})\neq 0$ and $\rho_{a}>0$ the correct period is $\frac{4\pi\rho_{a}}{f'(\rho_{a})}$ instead of $2\pi$. At the same time,   the actual magnitude of the magnetic field on the axis is no longer exactly given by the integration constant $H$.

As obvious from the above discussion, a generalization of the Bonnor--Melvin universe for NLEs has to be discussed case by case. To give specific examples, in what follows, instead of using the  formulas \eqref{LS-NLE-Melvin}--\eqref{f-Melvin-NLE}, we employ the alternative `trick' of double Wick rotation.

\subsection{Double Wick rotation}

It is well known, e.g. \cite{Gibbons_2001, Colleaux:2025uiw}, that the Bonnor--Melvin-type universe can be mapped to the corresponding planar electrically charged solution. To see this explicitly, let us return to the ansatz \eqref{Melvin}, 
where we treat the coordinates $(t,z,\varphi)$ as dimensionless, $[t]=[z]=[\varphi]=L^0$, while we assume that $[\Omega]=[\rho]=L$ and $[f]=L^2$. 
As shown in~\eqref{Omega-gen}, for any NLE the $\Omega$ takes the following form: $\Omega=\omega_1+\omega_2\rho^2$. This means that we can introduce the length scale $L_0$ and  perform the following coordinate transformation:
\be 
d\Omega=\frac{1}{L_0}\rho\, d\rho\,, 
\ee 
so that $[\Omega]=L$, 
together with the identifications $f(\rho)=L_0^2\tilde f(\Omega)$ and $ \psi(\rho)=L_0\tilde \psi(\Omega)$. This yields the following form of the Bonnor--Melvin Universe:
\ba\label{new} 
ds^2&=&-\Omega^2(dt^2-dz^2)+\frac{d\Omega^2}{\tilde f}+\tilde f L_0^2d\varphi^2\,,\nonumber\\
A&=&\tilde \psi L_0 d\varphi\,.
\ea 
In further writing $B=\tilde \psi_{,\Omega}$, 
the Ricci scalar, Riemann square, and the electromagnetic field invariant read
\ba 
{\cal R}&=&-\frac{1}{\Omega^2}\left(\Omega^2\,{\tilde f}''+4\Omega\,{\tilde f}'+2{\tilde f}\right)\,,\nonumber\\
R_{\mu\nu\kappa\lambda}R^{\mu\nu\kappa\lambda}&=&\frac{1}{\Omega^4}\bigl(\Omega^4({\tilde f}'')^2+4\Omega^2({\tilde f}')^2+4{\tilde f}^2\bigr)\,,\nonumber\\
{\cal S}&=&B^2\,.
\ea

Performing further the following double Wick rotation:
\be \label{DWick}
L_0\varphi \to i \tau\,,\quad t\to i x\,,
\ee
treating $\tau$ now as dimension-full, $[\tau]=L$,
together with $i \tilde \psi=\phi$ and $\tilde f\to h$,
we recover 
\ba \label{planar-metric}
ds^2&=&-h d\tau^2+\frac{d\Omega^2}{h}+\Omega^2(dx^2+dz^2)\,,\nonumber\\
A&=&\phi d\tau\,.
\ea 
This is nothing else than the electrically charged planar solution. 
Thus, in principle, 
in any NLE one can construct the generalized Bonnor--Melvin-type universe by a double Wick rotation of the corresponding electrically charged planar solution. 

In this map, the horizon equation $h(r_0)=0$ maps to the corresponding equation for the axis of symmetry. Note, however, that one needs to take into account the map 
\be \label{phi-Wick}
\phi=i \tilde \psi\,,   
\ee
together with the (potentially required) `transformation' of electric to magnetic Lagrangian, e.g. \eqref{Frolov-HaywardNLE} and \eqref{Frolov-Lagrangian-magnetic}. Consequently, 
additional rotation of charges and NLE parameters may be required, and 
ranges of parameters may be modified.
{As we shall see, this may  significantly alter the behavior of the metric function $h\leftrightarrow \tilde{f}$ and consequently also the position of horizon/axis.}


\subsection{Maxwell case}

{
To illustrate the above procedure, let us first consider the planar charged solution \eqref{planar-metric} in Maxwell theory, characterized by $(\Omega>0)$\footnote{\label{Footnote1}Note that due to re-parametrization freedom $\tau\to a\tau, \Omega\to a^{-1}\Omega, x\to ax, z\to az$ (supplemented by $M\to a^{-3}M, Q\to a^{-2}Q$) of planar metric \eqref{planar-metric} with the specific $h$ given above \eqref{h-Maxwell} we can set one of the constants (preferably $M$) to arbitrary value (preserving sign) and thus only one of these constants is really an independent parameter of the solution. The electric potential in \eqref{h-Maxwell}, or rather the charge within, rescales in compatible way.}
\be \label{h-Maxwell}
h=-\frac{2M}{\Omega}+\frac{Q^2}{\Omega^2}\,,\quad \phi=-\frac{Q}{\Omega}\,.
\ee 
 To obtain real potential $\tilde \psi$ upon the Wick rotation \eqref{DWick}, we set $Q\to iQ$, giving 
\be 
\tilde f=-\frac{2M}{\Omega}-\frac{Q^2}{\Omega^2}\,,\quad \tilde \psi=-\frac{Q}{\Omega}\,.
\ee 
To avoid asymptotically negative $\tilde f$, we now consider $M$ (which is simply an integration constant) to take negative values. We display the corresponding function $\tilde f$ for $M=-1$ and $Q=1$  by a black curve in Fig.~\ref{fig:Melvin-RegMax}.

While the solution is singular at $\Omega=0$, such a singularity occurs below the axis of symmetry, located at $\tilde f(\Omega=r_0)=0$. Since $r_0>0$, we may shift $\Omega$ so that the axis of symmetry is located at the origin of a new coordinate $r\equiv\Omega-r_0$,  ${\tilde f}(r=0)=0$. This yields a constraint that can, for example, be solved for $M$, giving 
\be\label{MMaxwellMelvin}
M=-\frac{Q^2}{2r_0}\,,
\ee
and 
\be 
\tilde f=\frac{Q^2r}{r_0(r+r_0)^2}\,,\quad \tilde \psi=-\frac{Q}{r+r_0}\,,\quad B=\frac{Q}{(r+r_0)^2}\,.
\ee 
Obviously, the solution is now regular for all $r>0$, with the axis of symmetry located at $r=0$. To ensure regularity at $r=0$, we look at the last two terms in \eqref{new}, introducing the new coordinate $d\tilde \rho=dr/\sqrt{\tilde f'(r=0)r}$, that is $\tilde\rho=\sqrt{r/\tilde f'(r=0)}$, so that close to $r=0$ we have 
\ba
\frac{d\Omega^2}{\tilde f}+\tilde f L_0^2d\varphi^2&\approx& d\tilde\rho^2+\tilde f'(r=0)rL_0^2d\varphi^2\nonumber\\
&=&d\tilde\rho^2+[\tilde f'(r=0)L_0]^2\tilde \rho^2d\varphi^2\,.
\ea 
Thus, to have regular axis, we have to identify
\be\label{L0K} 
\varphi\sim \varphi+\frac{2\pi}{K}\,,\quad K=\tilde f'(r=0)L_0\,.
\ee 
In our Maxwell case, this yields
\be 
K=\frac{Q^2L_0}{r_0^3}\,.
\ee 

To summarize, by construction, we have introduced 4 parameters $\{M,Q,r_0, L_0\}$. Of these, one (e.g $L_0$) is fixed by requiring the regularity of the axis, e.g. setting $K=1$ in \eqref{L0K}, another (e.g. $M$) can be eliminated by requiring ${\tilde f}(r=0)$ giving \eqref{MMaxwellMelvin}, and one of $M$ and $Q$ can be scaled away by exploiting the scaling symmetry described in Footnote~\ref{Footnote1},
giving rise to the Bonnor--Melvin Universe in Maxwell theory characterized by a single physical parameter. Such a parameter is then, for example fixed by requiring $B(r=0)=1$. 
We display the corresponding (upgraded) function $\tilde f$ for the Maxwell model 
in Fig.~\ref{fig:Melvin-RegMax1},  Fig.~\ref{fig:Melvin-BI1}, and Fig.~\ref{fig:Melvin-Frolov1} (shown by black curves; these are then used as a reference for comparing the behavior of $\tilde f$ in other NLE models, namely the RegMax, Born--Infeld, and Frolov--Hayward cases. The corresponding (upgraded) magnetic induction is similarly displayed and compared to other NLE models in Fig.~\ref{fig:B-Melvin}. 

\subsection{RegMax case}

Consider now a similar procedure in the RegMax theory.  
The planar solution now reads (taking $Q>0$ as always), c.f. \eqref{SSS-RegMax}:
\ba
h&=&-2\alpha^2Q+\frac{4\alpha Q^{3/2}}{3\Omega}-\frac{2M}{\Omega}+4\Omega\alpha^3\sqrt{Q}\nonumber\\
&&-4\alpha^4\Omega^2\log\Bigl(1+\frac{\sqrt{Q}}{\Omega\alpha}\Bigr)\,,\nonumber\\
\phi&=&-\frac{\alpha Q}{\alpha \Omega+\sqrt{Q}}\,,
\ea 
and is regular for $\Omega>0$.
To obtain real potential and to transform the electric RegMax Lagrangian \eqref{Tay}--\eqref{s} to the required magnetic one \cite{Hale:2023dpf}:
\ba 
\tilde {\cal L}_{\mbox{\tiny RegMax}}&=&2\alpha^4\Bigl(1-3\log(1-s)+\frac{s^3+3s^2-4s-2}{2(1-s)}\Bigr)\,,\nonumber\\
s&=&-\Bigl(S/\alpha^4\Bigr)^{1/4}\,,
\ea 
we perform the following Wick rotation:
\be 
Q\to iQ\,,\quad \alpha \to -\sqrt{i}\alpha\,,
\ee 
mapping ${\cal L}_{\mbox{\tiny RegMax}}\to -\tilde{\cal L}_{\mbox{\tiny RegMax}}$. This yields the following potential and metric function:
\ba \label{f-Melvin-RegMax}
\tilde f&=&2\alpha^2Q+\frac{4\alpha Q^{3/2}}{3\Omega}-\frac{2M}{\Omega}+4\Omega\alpha^3\sqrt{Q}\nonumber\\
&&+4\alpha^4\Omega^2\log\Bigl(1-\frac{\sqrt{Q}}{\Omega\alpha}\Bigr)\,,\nonumber\\
\tilde \psi&=&-\frac{\alpha Q}{\alpha \Omega-\sqrt{Q}}\,.
\ea 
The solution is now singular at $\Omega=\Omega_s\equiv \sqrt{Q}/\alpha$ {(both curvature and magnetic field diverge there)}; 
note different signs w.r.t. the planar case in various places.  Expanding for large $\Omega$, we get 
\be 
\tilde f=-\frac{2M}{\Omega}-\frac{Q^2}{\Omega^2}+O(1/\Omega^3)\,,
\ee 
thus we again take $M<0$ to have asymptotically positive $\tilde f$. At the same time, the metric function $\tilde f$ logarithmically diverges to minus infinity as the singularity $\Omega\to \Omega_s$ is approached. This means that for any $\alpha$ (and any $M<0$ and $Q>0$), there is always $r_0>\Omega_s$ for which the axis occurs, $\tilde f(\Omega=r_0)=0.$ 
We display the behavior of $\tilde{f}$ \eqref{f-Melvin-RegMax} in Fig.~\ref{fig:Melvin-RegMax} for $M=-1, Q=1$ and several values of $\alpha$, also including the standard Bonnor–Melvin ($\alpha\to\infty$ limit) for comparison. The fact that at the curvature singularity the function $\tilde{f}$ diverges (to minus infinity) is an important characteristic distinguishing RegMax from the other two NLE models considered.

By shifting $\Omega=r+r_0$, we may again move the axis to $r=0$, and obtain a fully regular RegMax Bonnor--Melvin universe. We display the corresponding (upgraded) $\tilde f$ in Fig.~\ref{fig:Melvin-RegMax1} and the associated magnetic induction in Fig.~\ref{fig:B-Melvin}; note that for $Q=1$, $B$ for RegMax and Maxwell theories coincide. Interestingly, $B$ always coincides for these two theories, which is related to the RegMax model providing a shift-like regularization of the point charge field \cite{Hale:2023dpf}. However, this magnetic field `sits on' different geometries.

\begin{figure}
    \centering
    \includegraphics[width=\linewidth]{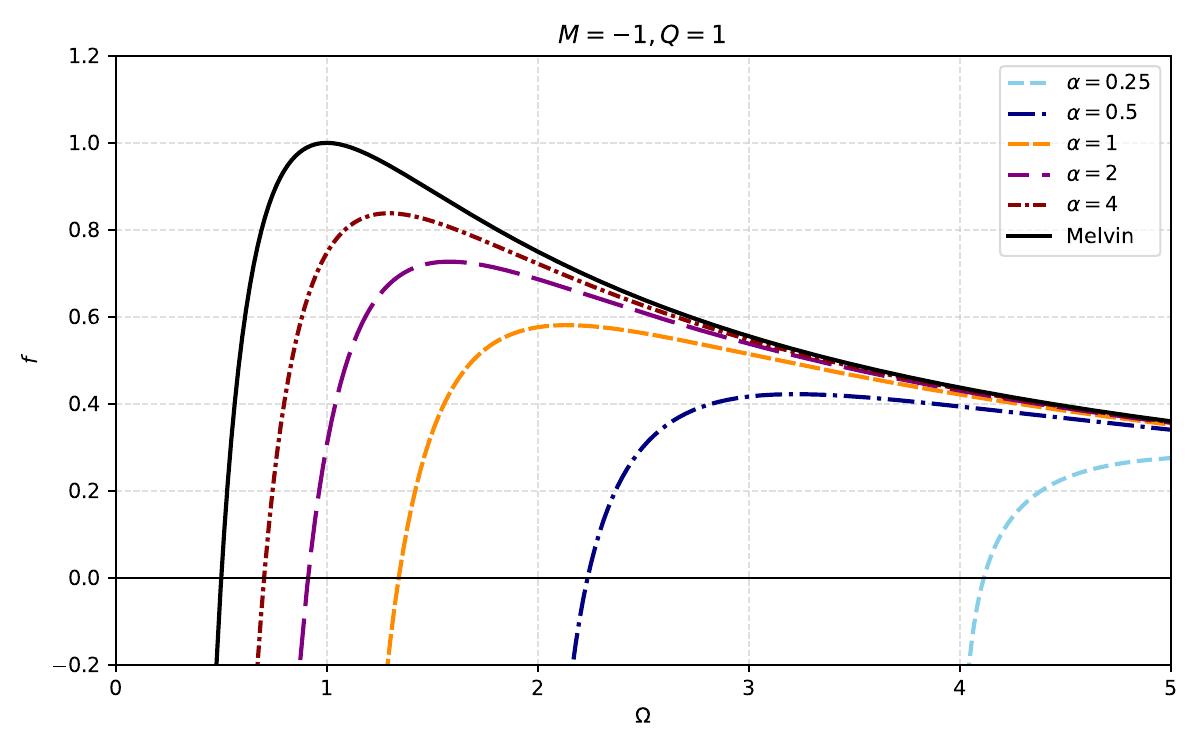}
    \caption{{\bf Metric function $\tilde f$ for RegMax model}. The metric function  $\tilde{f}$ \eqref{f-Melvin-RegMax} is displayed for $M=-1, Q=1$ and several values of the RegMax parameter $\alpha$. The standard Bonnor--Melvin Universe in Maxwell theory ($\alpha\to\infty$ limit) is displayed by a solid black curve.}
    \label{fig:Melvin-RegMax}
\end{figure}

\begin{figure}
    \centering
    \includegraphics[width=\linewidth]{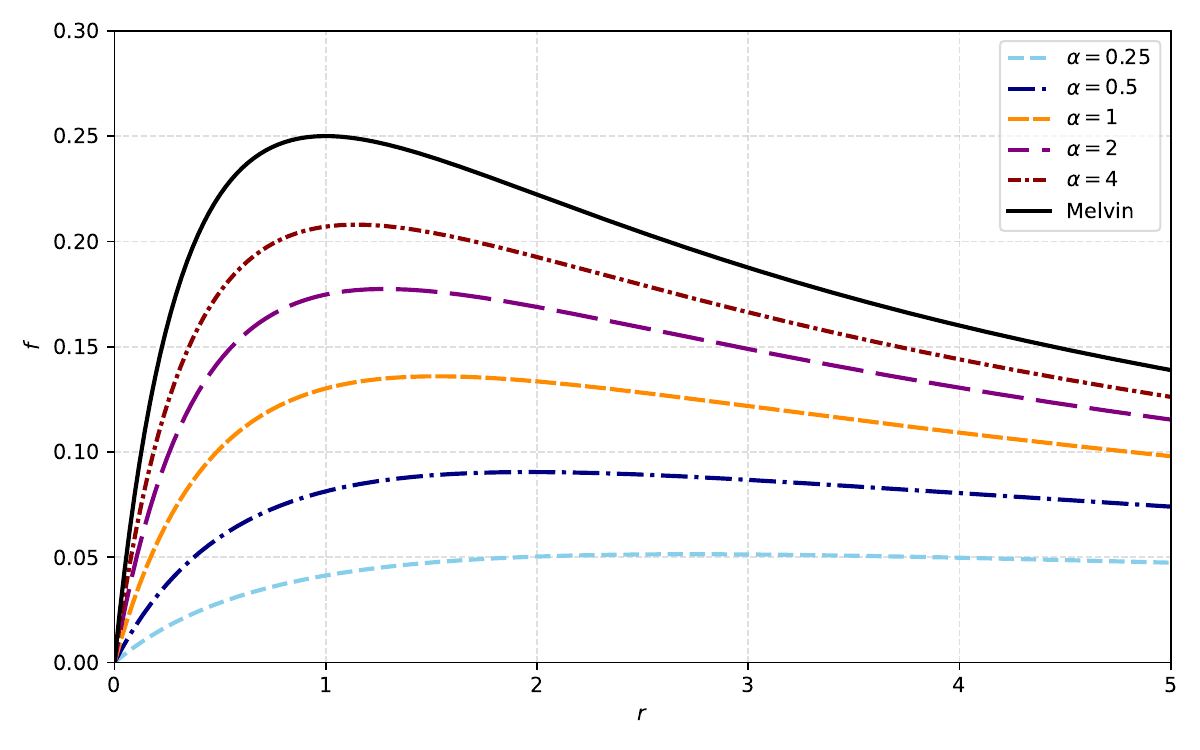}
    \caption{{\bf Bonnor--Melvin-type solution for RegMax model.} The (upgraded) metric function $\tilde f$ is displayed upon regularizing the axis and moving it to the origin of a new coordinate $r$, while demanding $B(r=0)=1$ and setting $Q=1$. The coloured curves correspond to various values of the RegMax parameter $\alpha$, with the Maxwell case displayed by a black curve. }
    \label{fig:Melvin-RegMax1}
\end{figure}

\subsection{Born--Infeld case}
Let us next turn to the Born--Infeld case. In this case, the planar solution reads (c.f. \eqref{SSS-BI} or e.g. \cite{Gunasekaran:2012dq}):
\ba 
h&=&-\frac{2M}{\Omega}+\frac{2b^2}{\Omega}\int_\Omega^\infty\Bigl( \sqrt{\Omega^4+\frac{Q^2}{b^2}}-\Omega^2\Bigr) d \Omega\nonumber\\
&=&-\frac{2M}{\Omega}
+\frac{2b^2\Omega^2}{3}\Bigl(1-\sqrt{1+\frac{Q^2}{b^2\Omega^4}}\Bigr)\nonumber\\
&&+\frac{4Q^2}{3\Omega^2}{}_2F_1\Bigl(\frac{1}{4},\frac{1}{2};\frac{5}{4};-\frac{Q^2}{b^2\Omega^4}\Bigr)\,,
\ea 
and 
\be 
E=\phi_{,\Omega}=\frac{Q}{\sqrt{\Omega^4+Q^2/b^2}}\,.
\ee 
In this case we only Wick rotate the charge $Q\to iQ$ and keep the Born--Infeld parameter as is. So we have 
\ba \label{f-Melvin-BI}
\tilde f&=&-\frac{2M}{\Omega}
+\frac{2b^2\Omega^2}{3}\Bigl(1-\sqrt{1-\frac{Q^2}{b^2\Omega^4}}\Bigr)\nonumber\\
&&-\frac{4Q^2}{3\Omega^2}{}_2F_1\Bigl(\frac{1}{4},\frac{1}{2};\frac{5}{4};\frac{Q^2}{b^2\Omega^4}\Bigr)\,,\nonumber\\
B&=&{\tilde \psi}_{,\Omega}=\frac{Q}{\sqrt{\Omega^4-Q^2/b^2}}\,.
\ea 
Again, we have to have $M<0$. There is a singularity of both curvature and magnetic field at 
\be 
\Omega=\Omega_s=\sqrt{Q/b}\,.
\ee
As first observed in \cite{Gibbons_2001}, this time, however, this singularity may be `naked' -- not hidden behind the axis of symmetry. This happens for 
\ba \label{b-critical}
b<b_c&\equiv& 18\Gamma\Bigl(\frac{3}{4}\Bigr)^{\!\!4}\frac{2\sqrt{2}\pi^{3/2}\Gamma\Bigl(\frac{3}{4}\Bigr)^2+2\Gamma\Bigl(\frac{3}{4}\Bigr)^4+\pi^3}{\Bigl[2\Gamma\Bigl(\frac{3}{4}\Bigr)^4-\pi^3\Bigr]^2}\nonumber\\
&\approx&  3.420660 \frac{M^2}{Q^3}\,.
\ea 
On the other hand, for $b>b_c$ we have an axis of symmetry and the solution can be made regular, using a similar procedure to what happens in the Maxwell case. We display the behavior of $\tilde{f}$ \eqref{f-Melvin-BI} in Fig.~\ref{fig:Melvin-BI} for $M=-1, Q=1$ and several values of $b$, also including standard Bonnor--Melvin ($b\to\infty$ limit) for comparison. Note that abrupt end of the curves corresponding to $\tilde{f}$ with finite $b$ is caused by appearance of curvature singularity, which arises even-though $\tilde{f}$ is finite there.\footnote{One might wonder why the RegMax Melvin is always regular, but Born--Infeld has singular cases, even though on the level of electrically charged black holes they have equivalent behavior. This is caused by ${\tilde f}$ being finite and potentially positive at the point of curvature divergence for Born--Infeld unlike the case of RegMax where it diverges logarithmically to minus infinity.}

\begin{figure}
    \centering
    \includegraphics[width=\linewidth]{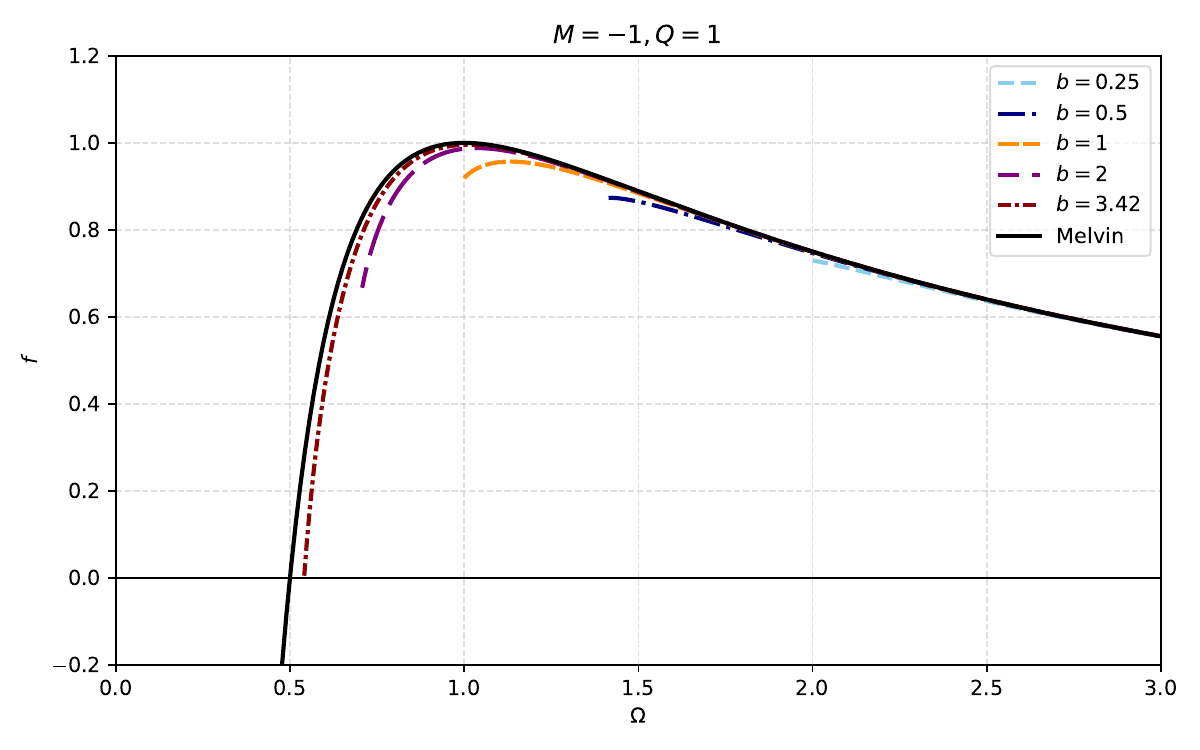}
    \caption{
    {\bf Metric function $\tilde f$ for Born--Infeld model}. The metric function  $\tilde{f}$ \eqref{f-Melvin-BI} is displayed for $M=-1, Q=1$ and several values of the Born--Infeld  parameter $b$. The standard Bonnor--Melvin Universe in Maxwell theory ($b \to\infty$ limit) is displayed by a solid black curve. Note that $b=3.42$ corresponds to approximate value of $b_{c}$ \eqref{b-critical} for selected parameters; for $b>b_c$ the axis exists. It is these $b>b_c$ solutions that give raise to regular Born--Infeld Bonnor--Melvin-type solutions displayed in the next figure. }
    \label{fig:Melvin-BI}
\end{figure}
\begin{figure}
    \centering
    \includegraphics[width=\linewidth]{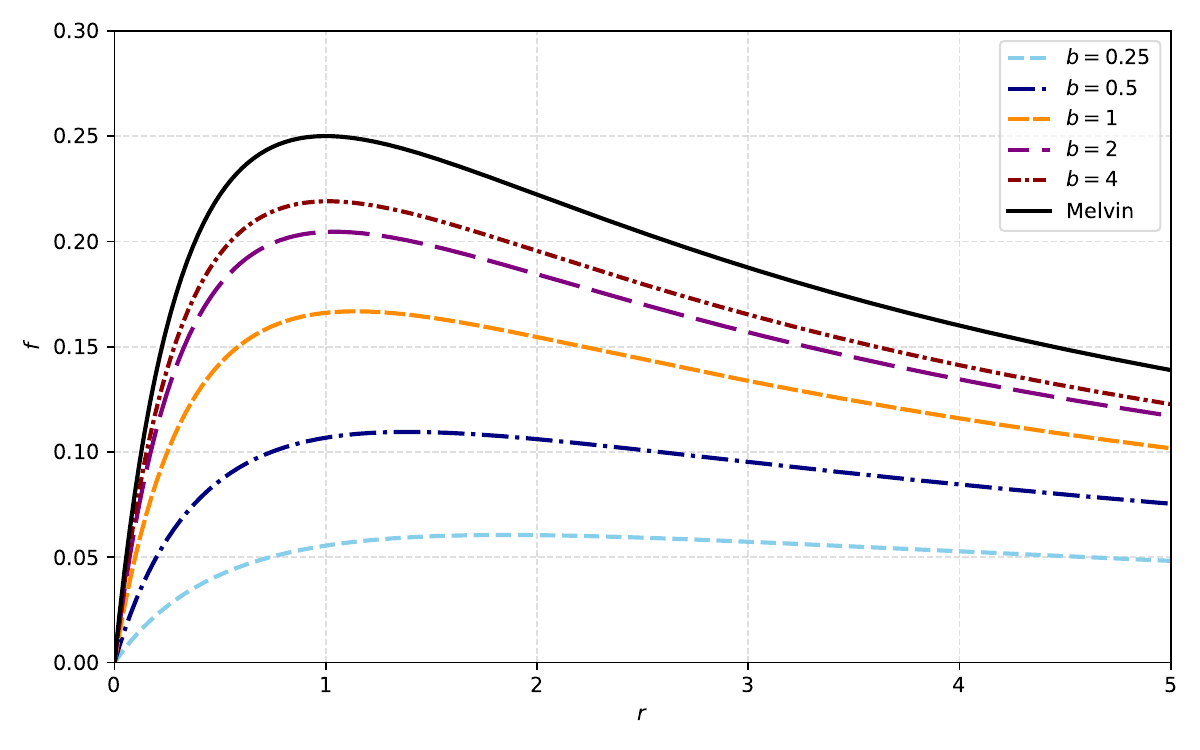}
    \caption{{\bf Bonnor--Melvin-type solution for Born--Infeld model.} The (upgraded) metric function $\tilde f$ is displayed upon regularizing the axis and moving it to the origin of a new coordinate $r$, while demanding $B(r=0)=1$ and setting $Q=1$. The coloured curves correspond to various values of the Born--Infeld parameter $b$, with the Maxwell case displayed by a black curve. {Note that by adjusting $M^2/Q^3$, the regular solution exists for any value of the parameter $b$.}}
    \label{fig:Melvin-BI1}
\end{figure}

Note, however, that the apparent non-existence of the regular axis is an `artifact' of fixing $M$ and $Q$ in \eqref{f-Melvin-BI} and \eqref{b-critical}; a regular Born--Infeld Bonnor--Melvin-type solution can be obtained for any value of the Born--Infeld parameter $b$ by adjusting the value of $M^2/Q^3$ (this ratio is invariant with respect to the re-parametrization discussed in Footnote \ref{Footnote1}, which carries over to the Wick rotated case), that is, by demanding that the axis of symmetry exists. Using this and the shift procedure as in the previous two cases, we then obtain the (upgraded) metric function $\tilde f$ characterizing the  
Born--Infeld Bonnor--Melvin-type solution displayed in Figs.~\ref{fig:Melvin-BI1} and \ref{fig:B-Melvin}.

\subsection{Frolov--Hayward  case}
 Let us finally turn to the Frolov--Hayward electrodynamics. 
 Its static spherically symmetric solution is derived in the Appendix \eqref{App}, and the corresponding planar 
solution (regular for $\Omega>0$) reads
 \ba
h&=&-\frac{2M}{\Omega}-\frac{2\beta^2Q}{3}+\frac{4\beta Q^{3/2}}{3\Omega}\arcCot\Bigl(\frac{\Omega\beta}{\sqrt{Q}}\Bigr)\nonumber\\
&&
+\frac{2\beta^4 \Omega^2}{3}\log\left(1+\frac{Q}{\beta^2\,\Omega^2}\right)\,,\nonumber\\
E&=&\phi_{,\Omega}=\frac{ Q}{\Omega^2+Q/\beta^2}\,. 
\ea
To obtain the real potential and transform the electric Lagrangian \eqref{Frolov-HaywardNLE} into the required  magnetic one, as explained in \cite{Hale:2023dpf}:
\ba
\tilde{\cal L}_{\mbox{\tiny FH}}={\beta^4}\bigl(\hat s+\ln(1-\hat s)\bigr)\,,\quad \hat s=-\sqrt{{\cal S}/\beta^4}\,,
\ea
we perform the following Wick rotation:
\be 
Q\to iQ\,,\quad \beta \to \sqrt{-i}\beta\,,
\ee 
mapping ${\cal L}_{\mbox{\tiny FH}}\to -\tilde{\cal L}_{\mbox{\tiny FH}}$. This yields the following metric function and magnetic field:
\ba \label{f-Melvin-Frolov}
\tilde{f}&=&-\frac{2M}{\Omega}-\frac{2\beta^2Q}{3}-\frac{4\beta Q^{3/2}}{3\Omega}\arcCoth\Bigl(\frac{\Omega\beta}{\sqrt{Q}}\Bigr)\nonumber\\
&&
-\frac{2\beta^4 \Omega^2}{3}\ln\left(1-\frac{Q}{\beta^2\,\Omega^2}\right)\,,\nonumber\\
B&=&{\tilde \psi}_{,\Omega}=\frac{ Q}{\Omega^2-Q/\beta^2}\,. \label{B_Frolov}
\ea
{Again, we have to consider $M<0$. The solution is singular at $\Omega=\Omega_s=\sqrt{Q/\beta^2}$ and similar to the Born--Infeld case this singularity may be `naked' -- not hidden behind the axis of symmetry. This happens for 
\be \label{beta-critical}
\beta<\beta_{c}\equiv -\frac{3M}{Q^{3/2}(1+\ln{4})}\approx  -1.25718 \frac{M}{Q^{3/2}}\,.
\ee 
On the other hand, for $\beta>\beta_c$ we have an axis of symmetry and the solution can be made regular. We display the behavior of $\tilde{f}$ \eqref{f-Melvin-Frolov} in Fig.~\ref{fig:Melvin-Frolov} for $M=-1, Q=1$ and several values of $\beta$, also including standard Bonnor--Melvin ($\beta\to\infty$ limit) for comparison. As in the Born--Infeld case, the abrupt end of the curves corresponding to $\tilde{f}$ for some $\beta$ is caused by appearance of curvature singularity for finite values of $\tilde{f}$.}

{
As in the Born--Infeld case, for any value of the parameter $\beta$, we can (by adjusting the re-parametrization invariant quantity $M/Q^{3/2}$) focus on solutions with axis of symmetry, giving rise to the regular Born--Infeld Bonnor--Melvin-type solutions displayed in Figs.~\ref{fig:Melvin-Frolov1} and \eqref{fig:B-Melvin}.  
}

\begin{figure}
    \centering
    \includegraphics[width=\linewidth]{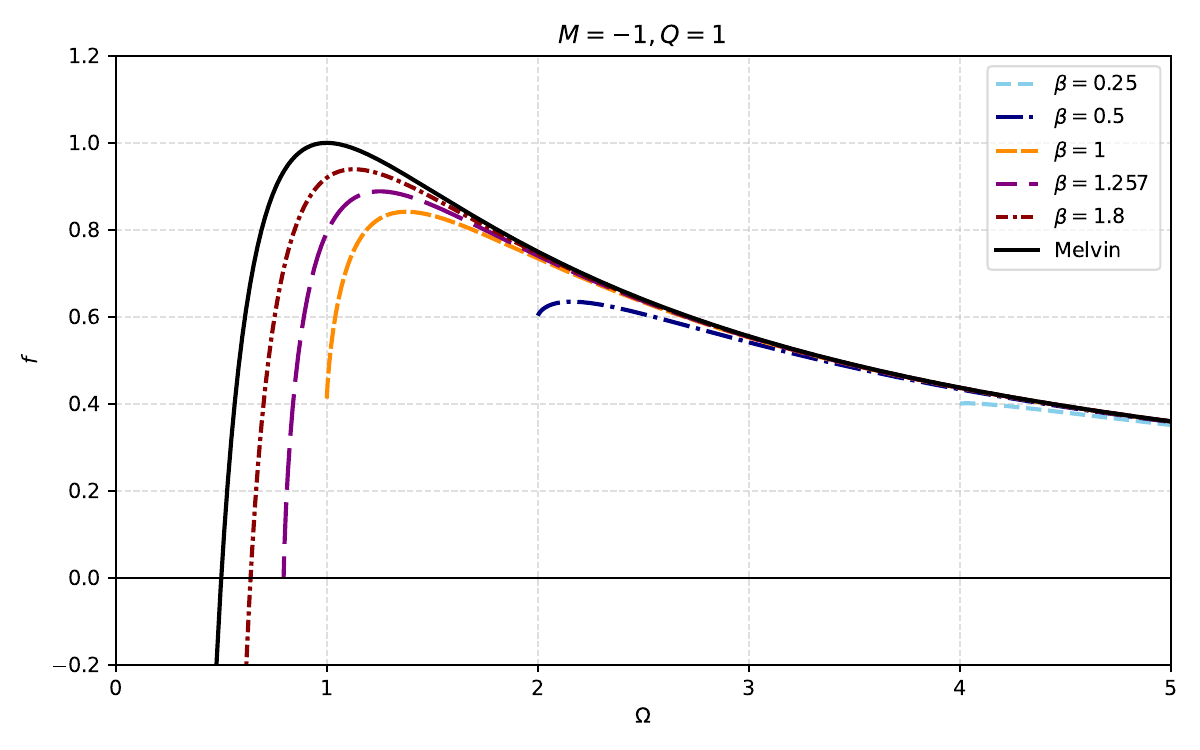}
    \caption{
    {\bf Metric function $\tilde f$ for Frolov--Hayward model}. The metric function  $\tilde{f}$ \eqref{f-Melvin-Frolov} is displayed for $M=-1, Q=1$ and several values of the Frolov--Hayward parameter $\beta$. The standard Bonnor--Melvin Universe in Maxwell theory ($\beta \to\infty$ limit) is displayed by a solid black curve. Note that $\beta=1.257$ corresponds to approximate value of $\beta_{c}$ \eqref{beta-critical} for selected parameters; for $\beta>\beta_c$ the axis exists. It is these $\beta>\beta_c$ solutions that give raise to regular Frolov--Hayward Bonnor--Melvin-type solutions displayed in the next figure. 
    }
    \label{fig:Melvin-Frolov}
\end{figure}

\begin{figure}
    \centering
    \includegraphics[width=\linewidth]{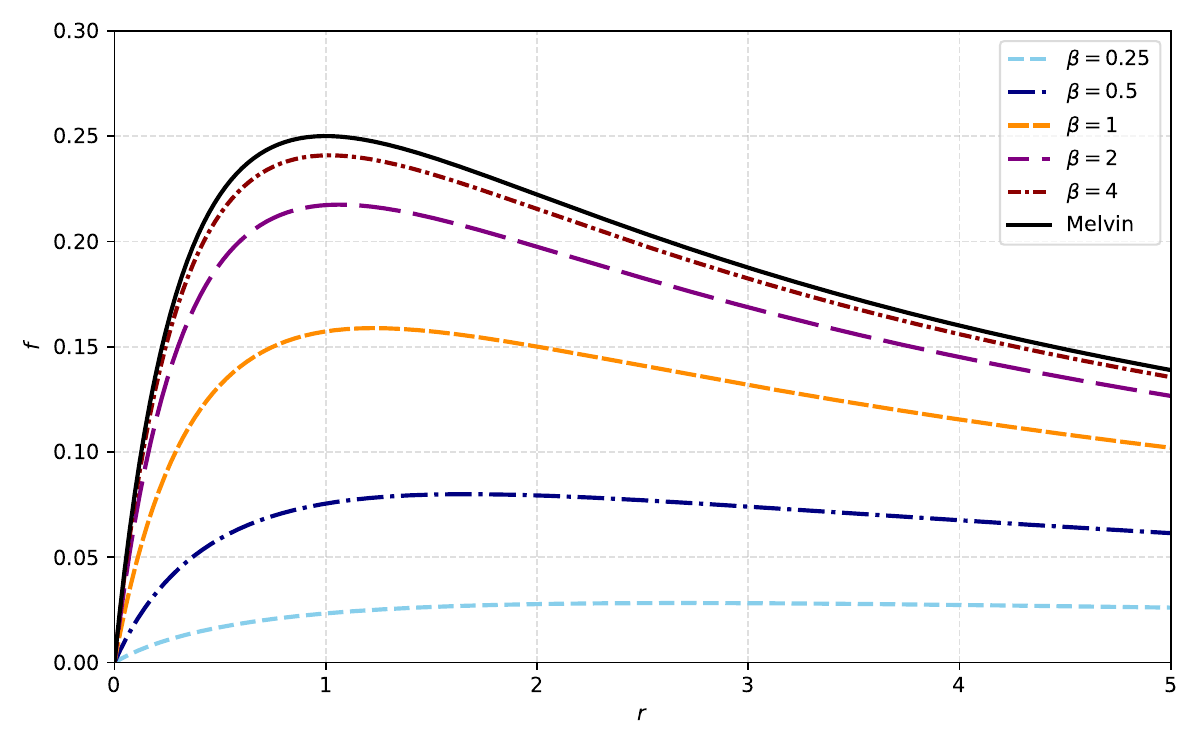}
    \caption{
    {\bf Bonnor--Melvin-type solution for  Frolov--Hayward model.} The (upgraded) metric function $\tilde f$ is displayed upon regularizing the axis and moving it to the origin of a new coordinate $r$, while demanding $B(r=0)=1$ and setting $Q=1$. The coloured curves correspond to various values of the Frolov--Hayward  parameter $\beta$, with the Maxwell case displayed by a black curve. 
    {Note that by adjusting $M/Q^{3/2}$, the regular solution exists for any value of the parameter $\beta$.}}
    \label{fig:Melvin-Frolov1}
\end{figure}

\begin{figure}
    \centering
    \includegraphics[width=\linewidth]{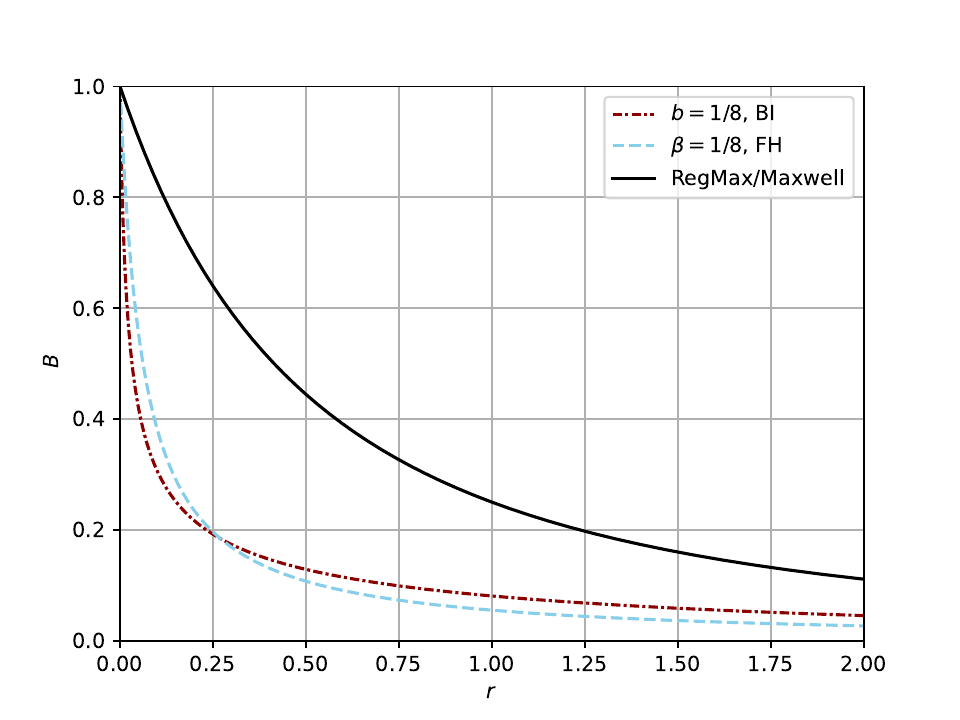}
    \caption{{\bf Magnetic induction $B$ for various NLE models}. The magnetic induction $B$ is  displayed for various NLE Bonnor--Melvin-type solutions after shifting the coordinate axis to the origin.     
    We have set $Q=1$ and demanded $B(r=0)=1$ in all cases. Note that the RegMax and Maxwell cases coincide.}
    \label{fig:B-Melvin}
\end{figure}

\section{Conclusion}\label{Sec:6}
In this paper, we have analyzed the Bertotti--Robinson type solutions in theories of NLE, showing that 
any NLE model admits a generalized Bertotti--Robinson solution of the form \eqref{final-metric-Bertotti}.
Unlike the Maxwell case, the corresponding radii of $AdS_{2}$ and $S_{2}$ geometries (which form a generalized Bertotti--Robinson solution as their direct product) are generally different, unless the corresponding NLE is conformally invariant. Such a geometry is then no longer conformally flat. Moreover, in order to solve the NLE equations for a particular model, the four constants \eqref{4-constants} are constrained by \eqref{T-Bertotti2}, reducing the number of independent parameters to two. (This hints at the potential degeneracy of geometry with respect to different NLE models.) We have provided explicit formulas for the two radii for several specific NLE models and generalized the analysis to the case involving a cosmological constant. All these results were summarized in Table \ref{Table:BR}. 

Subsequently, we have established an explicit connection between these results and the near-horizon geometries of extremal charged NLE black holes. 
{Interestingly, several of the studied models feature a charge gap for the existence of extremal black holes/ Bertotti--Robinson solutions. This can be traced to the existence of Schwarzschild-type (weakly charged) solutions without inner Cauchy horizons that are characteristic for NLE models with finite self energy \cite{Hale:2025ezt, Hale:2025urg, Russo:2026vnj}.  
We have also analyzed the linear stability of the specific NLE extremal black holes and demonstrated that they can provide particle-like models provided their interior is replaced with the corresponding generalized Bertotti--Robinson universe. 
}

In the second part of the paper, we have reviewed and expanded the analysis of Bonnor--Melvin solutions for NLE models given in \cite{Gibbons_2001}. We have shown that in the case of the RegMax model, we always obtain a regular Bonnor--Melvin solution, unlike in the cases of the Born--Infeld (the singular behavior for a certain range of parameters was already spotted in \cite{Gibbons_2001}) and Frolov--Hayward models. 
{Note, however, that the existence of such singular solutions does not prevent one from having the regular Bonnor--Melvin-type solutions for every possible value of the corresponding NLE parameter.}
We have also clarified the issues arising in the double Wick rotation approach. Namely, the necessity to also Wick rotate the NLE parameters for models whose Lagrangian has to be non-trivially generalized for magnetic solutions.

Let us end by noting that not all of the  NLE models need to admit Bonnor--Melvin-type solutions. One such exception is the ``square root'' Lagrangian \cite{Nielsen:1973qs} which does not admit  electrically charged spherically symmetric solutions and therefore does not have a planar one as well. So we cannot apply the Wick rotation trick. The same  conclusion can also be confirmed by  direct computation. It leads to an equation without solutions, confirming  that the ``square root'' model does not allow for the Bonnor--Melvin-type solutions.

\section*{Acknowledgments}

We would like to thank Robie Hennigar, Maciej Kolanowski,  and Ivan Kol{\'a}{\v r} for discussions. 
D.K. acknowledges support from the Charles University Research Center Grant No. UNCE24/SCI/016.

\appendix

\section{Frolov--Hayward black holes}\label{App}
\begin{figure}
	\begin{center}
		\includegraphics[scale=0.6]{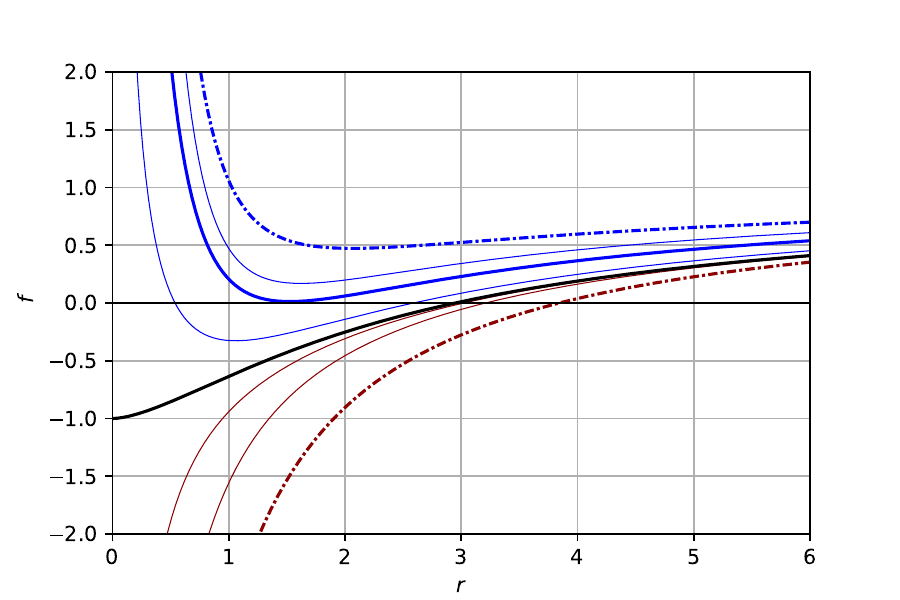}
		\caption{{\bf Frolov--Hayward black holes.} We display the three qualitatively distinct types of behavior of the metric function $f$. Namely, the blue curves correspond to the Reissner--N{\"o}rdstrom-type black holes ($M<U_{\mbox{\tiny self}}$), the red curves to the Schwarzschild-type branch black holes ($M>U_{\mbox{\tiny self}}$), and the black curve to the marginal case between the two. As is the case for any NLE with finite field strength, there exists a charge gap for the existence of Reissner--N{\"o}rdstrom-type (and thence also extremal) black holes. }\label{Fig-f-Frolov}
	\end{center}
\end{figure}
In this appendix, we present the static spherically symmetric black hole solutions for the recently obtained Frolov--Hayward  NLE model \cite{Frolov:2025ddw}. 
The corresponding Lagrangian is given by \eqref{Frolov-HaywardNLE}. It 
has a Maxwell limit upon setting $\beta\to \infty$.  

Assuming the standard spherically symmetric ansatz:
\ba 
ds^2&=&-fdt^2+\frac{dr^2}{f}+r^2d\Omega^2\,,\nonumber\\
F&=&Edr\wedge dt\,,
\ea 
where $E\equiv \psi_{,r}$, in terms of the vector potential $A=\psi dt$, we obtain:
\ba\label{f-FrolovApp}
f&=&1-\frac{2\beta^2Q}{3}-\frac{2M}{r}+\frac{4\beta Q^{3/2}}{3r}\arcCot\Bigl(\frac{r\beta}{\sqrt{Q}}\Bigr)\nonumber\\
&&
+\frac{2\beta^4 r^2}{3}\log\left(1+\frac{Q}{\beta^2\,r^2}\right)\,,\nonumber\\
E&=&\frac{Q}{r^2+Q/\beta^2}\,, \label{E_Frolov}
\ea
where $M$ represents the ADM mass and $Q$ is the asymptotic charge.

Obviously, the electric field is finite in origin and has finite self-energy.
Depending on the choice of parameters, the solution describes a black hole with one or two horizons, or a naked singularity, see Fig.~\ref{Fig-f-Frolov}.
In particular, performing the expansion of $f$ around the origin at $r=0$, we find 
\be \label{Frolov_expansion}
f=\frac{2(U_{\mbox{\tiny self}}-M)}{r}+1-2Q\beta^2+O(r^2)\,,
\ee
where $U_{\mbox{\tiny self}}$ is the point particle self energy,
\be 
U_{\mbox{\tiny self}}= \frac{\pi \beta Q^{3/2}}{3}\,.
\ee 
For large masses $M>U_{\mbox{\tiny self}}$, we have Schwarzschild-like branch with a single horizon, whereas for $M<U_{\mbox{\tiny self}}$ the behavior is Reissner--N{\"o}rdstrom like, with zero, one, or two horizons. 
As is the case for any NLE with finite self-energy of point particles \cite{Hale:2025ezt,Hale:2025urg,Russo:2026vnj}, there exists a charge (and mass) gap for the existence of the Reissner--N{\"o}rdstrom branch, determined from the next to leading term in the expansion \eqref{Frolov_expansion}, namely 
\be \label{Frolov_bound_App}
Q>Q_\star=\frac{1}{2\beta^2}\,.
\ee 
Weakly charged black holes with $Q<Q_\star$ are necessarily Schwarzschild-like -- no extremal black holes can exist in this gap. We refer to papers \cite{Hale:2025ezt,Hale:2025urg,Russo:2026vnj} for a thorough discussion on these points.

Note that, due to the fact that 
\be\label{LS-spherical-Frolov}
{\cal L}_{\cal S}=-\frac{1}{2\bigl(1-\hat s\bigr)}\,,
\ee
Frolov--Hayward model is very similar to the recently studied RegMax model, e.g. \cite{Hale:2023dpf}. For the latter, however, one can easily construct self-gravitating solutions, such as the C-metric, the Robinson--Trautman geometry, or the slowly rotating black holes. Currently, no such solutions are known for the Frolov--Hayward model.

Finally, in order to find the magnetically charged solutions in the Frolov--Hayward theory, one has to re-define the corresponding Lagrangian, as discussed in \cite{Hale:2023dpf} for the RegMax theory. Namely, one can take the Lagrangian \eqref{Frolov-HaywardNLE}, redefine $\hat s$, and flip the overall sign, to obtain
\be \label{Frolov-Lagrangian-magnetic}
{\cal L}_{\mbox{\tiny FH}}^{\mbox{\tiny (mag)}}={\beta^4}\bigl(\hat s+\ln(1-\hat s)\bigr)\,,\quad 
\hat s=-\sqrt{{\cal S}/\beta^4}\,.
\ee
With this, the magnetically charged solution for $f$ would be still given by \eqref{f-FrolovApp}, with the electric charge $Q$ simply replaced by the magnetic charge $Q_m$. It is this  Lagrangian \eqref{Frolov-Lagrangian-magnetic}, which is used to construct the Frolov--Hayward Melvin universe in the next appendix and in the main text.


\input{BR_ArXiv.bbl}

\end{document}

%% file: BR_ArXiv.bbl
%